\newtheoremstyle{mytheoremstyle}{3pt}{3pt}{\itshape}{}{\bf}{.}{.3em}{} 
\theoremstyle{mytheoremstyle}
\newtheorem{theorem}{Theorem}
\newtheorem{lemma}[theorem]{Lemma}
\newtheorem{proposition}[theorem]{Proposition}
\newtheorem{definition}{Definition}
\newtheorem{construction}{Construction}[section]
\newtheorem{example}{Example}[section]
\theoremstyle{remark}
\newtheorem{remark}{Remark}[section]
\newcommand{\cA}{\mathscr{A}}
\newcommand{\cC}{\mathscr{C}}
\newcommand{\cD}{\mathscr{D}}
\newcommand{\cG}{\mathscr{G}}
\newcommand{\cR}{\mathscr{R}}
\newcommand{\cS}{\mathscr{S}}
\DeclarePairedDelimiter{\norm}{\lVert}{\rVert} 
\newcommand{\etal}{{\em et al.\ }}
\newcommand{\isi}{\mathrm{ISI}}
\newcommand{\de}{d_{\mathrm{edit}}}
\begin{document}
	\title{First-order optimal codes for an adversarial nanopore channel\\
		\thanks{
			\hspace{-1.2em}\rule{1.5in}{.4pt}
			
			H.~Xie is with School of Science and Engineering, The Chinese University of Hong Kong (Shenzhen), Shenzhen, Guangdong, 518172,  P.\,R.\,China (Email: huilingxie@link.cuhk.edu.cn).
			
			Z.~Chen is with School of Science and Engineering, Future Networks of Intelligence Institute, The Chinese University of Hong Kong (Shenzhen), Shenzhen, Guangdong, 518172, P.\,R.\,China (Email: chenztan@cuhk.edu.cn).
			
		}
	}
	\author{%
		{Huiling Xie} 
		\hspace*{1in}\and
		{Zitan Chen}
	}
	\maketitle
	
	\begin{abstract}
		We study error-correcting codes for an adversarial nanopore channel, where a $q$-ary string is first transformed by an inter-symbol interference channel with window size $\ell$ into a sequence of overlapping $\ell$-mers, and an adversary then corrupts this $\ell$-mer sequence by introducing at most $t$ edits.
		For the deletion-only nanopore channel, we show that the optimal redundancy of $t$-deletion-correcting codes of length $n$ lies between $t\log_q n+\Omega(1)$ and $2t\log_q n-\log_q\log_2 n+O(1)$.
		We then give two explicit deletion-correcting constructions in the regime $t\leq \min\{(\ell-1)/2,(\ell+2)/3\}$.
		The first construction relies on generalized Reed-Solomon codes and has redundancy $2t\log_q n+\Theta(\log\log n)$.
		The second is based on Sidon sets (or rather $B_t$ sequences) and has redundancy $t\log_q n+\Theta(\log\log n)$, matching the lower bound to first order.
		We further extend the $B_t$-based approach to the edit channel, allowing insertions, deletions, and substitutions of $\ell$-mers. In the regime $t\leq \min\{(\ell-1)/4,(\ell+2)/6\}$, this gives explicit $t$-edit-correcting codes with redundancy $t\log_q n+\Theta(\log\log n)$, which is first-order optimal.
	\end{abstract}
	
	\section{Introduction}
	Nanopore sequencing has attracted considerable attention for DNA-based data storage due to its advantages over other sequencing technologies. However, it is prone to several types of errors, including deletion, duplication and substitution errors. In essence, nanopore sequencing can be modeled as passing a string through an inter-symbol interference (ISI) channel, followed by additional noisy channels that further corrupt the ISI channel output. Specifically, the ISI channel scans the string from left to right using a sliding window of size $\ell$, shifted by one position each step, thereby producing a sequence of $\ell$-mers. The subsequent noisy channels then introduce errors into this $\ell$-mer sequence. The goal of nanopore sequencing is to recover the original string from the final noisy output.
	
	A large body of work has studied nanopore sequencing in recent years. Depending on whether the underlying channel model is probabilistic, the literature can be grouped into two broad classes. Under a probabilistic channel model, Magner \etal \cite{magner2016fundamental} gave an information-theoretic analysis of the impact of sticky insertion-deletion errors in nanopore sequencers, without explicitly accounting for ISI. Later, Mao \etal \cite{mao2018models} developed a signal-level model for nanopore sequencing that captures ISI and analyzed achievable information rates. In \cite{hamoum2021channel,hamoum2023channel,maarouf2023achievable}, the authors studied a nanopore channel model based on a Markov chain with memory of order $\ell$. In subsequent work, McBain \etal \cite{mcbain2022finite,mcbain2024information} introduced a channel model in the language of semi-Markov processes and estimated its achievable information rates using efficient algorithms. The achievable information rates of this model were further examined in \cite{arvind2025achievable,mcbain2025achievable}. A number of other works have also considered the design of coding schemes for nanopore sequencing under various probabilistic models. We refer the interested reader to the recent works \cite{mcbain2023homophonic,vidal2024concatenated,volkel2025nanopore,whritenour2025constrained,lin2025block} and the references therein.	
	
	Another line of research assumes deterministic channel models and investigates the combinatorial and algebraic coding problems arising in nanopore sequencing. Specifically, the authors of \cite{hulett2021coding} introduced a deterministic model capturing only ISI and substitution errors, and proposed coding schemes for error correction. Extending the work of \cite{hulett2021coding}, the authors of \cite{chee2024coding} studied coding schemes that can deal with duplication and deletion errors. In addition, a number of recent papers model the noisy output of deterministic nanopore channels as a sequence of erroneous $\ell$-mer compositions. 
	Under this framework, the work \cite{banerjee2024error} considered ISI effect with substitution errors and gave a coding scheme that can correct a single substitution with near-optimal redundancy. Extensions to multiple substitution errors were further explored in \cite{sun2024bounds,banerjee2025correcting}.
	Assuming instead that only deletion errors are present in addition to ISI, the work \cite{banerjee2024correcting} proposed a single-deletion-correcting code with near-optimal redundancy. 
	Meanwhile, the paper \cite{yu2025asymptotic} studied ISI with tandem-duplication errors in the $\ell$-mer composition sequence and analyzed the optimal code rate.
	We also note that the recent works \cite{chee2024transverse,yerushalmi2024capacity} also studied various coding-theoretic questions under the assumption that the noisy channel output is in the form of $\ell$-mer compositions.
	
	Motivated by the fact that basecallers are often used to infer a sequence of $\ell$-mers from the output of practical nanopore sequencers, we model the noisy channel output as a sequence of erroneous $\ell$-mers and
	study bounds and constructions for codes that can correct a constant number of $\ell$-mer edits,
	allowing insertions, deletions, and substitutions of $\ell$-mers. 
	We note that modeling the noisy channel output as a sequence of $\ell$-mers yields a model that resembles the $\ell$-symbol read channel \cite{cassuto2011codes,yaakobi2016constructions,chee2021locally}. However, the reads in the $\ell$-symbol read channel are taken cyclically. That is, after reading the last $\ell$ symbols of the input, the channel continues with the size-$\ell$ window that wraps around the end of the string, ending with the window consisting of the last symbols followed by the first $\ell-1$ symbols. In \cite{chee2021locally}, Chee \etal introduced locally-constrained de Bruijn codes for correcting synchronization errors in $\ell$-symbol read channels. For the case of $\ell=2$, codes correcting deletion and substitution errors were studied in \cite{chee2020codes}.
	
	The nanopore channel considered in this work is also closely related to the classical adversarial edit channel, and in particular to the $q$-ary deletion channel \cite{levenshtein1966binary}. For the binary deletion channel, Levenshtein \cite{levenshtein1966binary} showed that the optimal redundancy for $t$-deletion-correcting binary codes of length $n$ lies between $t\log_2 n+o(\log n)$ and $2t\log_2 n+o(\log n)$, where $t$ is a constant independent of $n$. For a single deletion, the Varshamov-Tenengolts (VT) construction \cite{levenshtein1966binary,varshamov1965codes} is known to have redundancy at most $\log_2(n+1)$, and for two deletions, Guruswami and H{\aa}stad \cite{guruswami2021explicit} constructed codes with redundancy $4 \log_2 n + O(\log \log n)$, matching the existential upper bound. More generally, explicit constructions of $t$-deletion-correcting binary codes with order-optimal redundancy are presented in \cite{sima2020optimal,cheng2018deterministic}. In particular, the construction of Sima and Bruck \cite{sima2020optimal} achieves redundancy $8t\log_2 n+o(\log n)$. For $q$-ary codes, Levenshtein \cite{levenshtein2002bounds} proved analogous redundancy bounds $t\log_q n + o(\log n)$ and $2t\log_q n +o(\log n)$ (measured in $q$-ary symbols). A construction with nearly optimal redundancy for a single deletion was given by Tenengolts \cite{tenengolts1984nonbinary}. In \cite{sima2020optimal2}, Sima \etal proposed $t$-deletion-correcting $q$-ary codes that require at most $4t(1+\epsilon)\log_q n+o(\log n)$ redundant symbols. On the existential side, Alon \etal \cite{alon2023logarithmically} obtained the first asymptotic improvement to Levenshtein's lower bound on the optimal code size, while Levenshtein's upper bound has been improved in various parameter regimes in a series of works \cite{kulkarni2013nonasymptotic,fazeli2015generalized,cullina2016generalized,yasunaga2024improved,liu2022bounds,kong2025combinatorial}. 
	
	Compared with the deletion-only setting, explicit constructions for the general $q$-ary edit channel, allowing insertions, deletions, and substitutions, are less developed. Levenshtein \cite{levenshtein1966binary} showed that the binary VT construction can correct a single edit with redundancy $\log_2(n+1)$, and later works \cite{cai2021correcting,gabrys2022beyond} constructed single-edit-correcting codes with nearly optimal redundancy over non-binary	alphabets. For multiple edits, one standard reduction is that any $2t$-deletion-correcting code is also $t$-edit-correcting, since a substitution may be viewed as a deletion followed by an insertion. This reduction, however, generally doubles the deletion budget and does not yield optimal redundancy. Conversely, since deletions form a special case of edits, any lower bound on the redundancy of $t$-deletion-correcting codes also applies to $t$-edit-correcting codes.
	
	\indent{\em Our contributions.} 
	We first study the deletion-only nanopore channel. We observe that this channel is weaker than the classical adversarial deletion channel. Consequently, the lower bound of Alon \etal \cite{alon2023logarithmically} applies directly and yields an existential upper bound $2t\log_q n-\log_q\log_2 n+O(1)$ on the optimal redundancy. 
	Moreover, drawing on ideas of Levenshtein \cite{levenshtein1966binary}, we use the probabilistic method to establish an upper bound on the optimal code size for the deletion-only nanopore channel, implying a lower bound of $t\log_q n+\Omega(1)$ on the optimal redundancy.
	
	We then present two explicit constructions of $t$-deletion-correcting codes in the small-$t$ regime where $t\leq \min\{\ell-1)/2,(\ell+2)/3\}$. 
	The first construction reduces deletion correction to an almost Hamming-type error-correction problem and uses generalized Reed-Solomon (GRS) codes as a key building block.\footnote{More generally, any efficiently decodable $t$-Hamming-error-correcting code over a suitable alphabet would suffice for this purpose.} It gives $q$-ary codes of length $n$ with redundancy $2t\log_q n+\Theta(\log\log n)$.
	The second construction exploits the fact that the relevant error pattern that GRS codes have to deal with is a nonnegative integer vector whose total weight is at most $t$. Using Sidon sets, more precisely $B_t$-sequences, we obtain $t$-deletion-correcting codes with redundancy $t\log_q n+\Theta(\log\log n)$, which matches the lower bound $t\log_q n+\Omega(1)$ to first order.
	
	Lastly, we extend the $B_t$-based construction to the general edit nanopore channel, where the adversary may insert, delete, or substitute $\ell$-mers. In the regime $t\leq \min\{(\ell-1)/4,(\ell+2)/6\}$, we construct explicit $t$-edit-correcting codes with redundancy $t\log_q n+\Theta(\log\log n)$, again achieving first-order optimal redundancy.

	\indent{\em Paper outline.} The rest of the paper is organized as follows. Section~\ref{sec:model} formally introduces our channel model and present several structural results used throughout the paper.  Section~\ref{sec:bounds} establishes existential bounds on the optimal code size for the deletion-only nanopore channel. 
	Section~\ref{sec:con} gives explicit constructions of $t$-deletion-correcting codes for this channel.
	Section~\ref{sec:edit} presents constructions of $t$-edit-correcting codes. 
	Finally, Section~\ref{sec:re} discusses open problems and future directions.
	
	\section{Channel model and structural results}~\label{sec:model}
	Let \(\Sigma_q\) be an alphabet of size \(q\). While DNA sequencing corresponds to the case $q=4$ with \(\Sigma_4 = \{\mathtt{A}, \mathtt{T}, \mathtt{C}, \mathtt{G}\}\), we consider the general setting in which $\Sigma_q$ is any finite alphabet of size $q>1$. 
	Let \(\mathbf{x}:=(x_1,\dots,x_n)\in\Sigma_q^n\) be a $q$-ary string of length $n$ to be sequenced. 
	The nanopore channel considered in this work is a concatenation of an ISI channel with window size $\ell>1$ and an adversarial edit channel that imposes at most $t$ edits, where an edit is an insertion, deletion or substitution.
	Throughout, we assume $\ell$ and $t$ are constants independent of $n$.
	Motivated by practical applications of nanopore sequencing, where known adapter sequences are attached to both ends of the DNA strand to be sequenced, we assume that two (possibly distinct) length‑$\ell$ strings over $\Sigma_q$ are added to the string $\mathbf{x}$, one as a prefix and the other as a suffix.\footnote{We note that adapter sequences used in practice may be much longer than $\ell$, but here we assume they have length $\ell$ for simplicity.}
	In other words, the ISI channel (and hence the nanopore channel) takes \(\tilde{\mathbf{x}}:=(x_{-\ell+1},\dots,x_0,\mathbf{x},x_{n+1},\dots,x_{n+\ell})\in\Sigma_q^{n+2\ell}\) as the \emph{input string}.
	More precisely, the ISI channel is a mapping that maps the input string $\tilde{\mathbf{x}}\in\Sigma_q^{n+2\ell}$ to a sequence of its substrings $\mathbf{s}_0^{n+\ell}:=(\mathbf{s}_0,\dots,\mathbf{s}_{n+\ell})\in (\Sigma_q^\ell)^{n+\ell+1}$, where for \(i = 0, \dots, n+\ell\), the substring \(\mathbf{s}_i := (x_{i-\ell+1}, \dots, x_i)\) is called an $\ell$-mer. Clearly, the $\ell$-mers $\mathbf{s}_0$ and $\mathbf{s}_{n+\ell}$ correspond to the known adapter sequences.
	The edit channel then takes the sequence of $\ell$-mers \(\mathbf{s}_0^{n+\ell}\) as input and apply at most \(t\) edits to these \(\ell\)-mers.
	In this work, we adopt the anchored convention that the two adapter $\ell$-mers are not edited and that no insertions occur before $\mathbf{s}_0$ or after $\mathbf{s}_{n+\ell}$. Thus, the output of the edit channel has the form $\mathbf{z}_0^N :=(\mathbf{z}_0,\ldots,\mathbf{z}_N)\in(\Sigma_q^\ell)^{N+1}$, where $N+1$ is the number of output $\ell$-mers, and $\mathbf{z}_0=\mathbf{s}_0, \mathbf{z}_N=\mathbf{s}_{n+\ell}$. The block diagram of the nanopore channel is depicted in Figure~\ref{fig:channel}.
	
	\tikzstyle{startstop} = [rectangle, rounded corners, minimum height=0.8cm, text centered, draw = white]
	\tikzstyle{process} = [rectangle, minimum width=2cm, minimum height=0.8cm, text centered, draw=black]
	\begin{figure}[!t]
		\centering
		\begin{tikzpicture}[node distance=1.5cm]
			\node[coordinate] (start) {};
			\node (p1) [process] [below of=start] {ISI channel with window size $\ell$};
			\node (p2) [process] [below of=p1] {Adversarial channel with at most $t$ edits};
			\node[coordinate, below of=p2] (end) {};
			
			\draw[->] (start) -- node [right] {$\tilde{\mathbf{x}} = (x_{-\ell+1}, \dots, x_{n+\ell})\in \Sigma^{n+2\ell}$} (p1)  ;
			\draw [->] (p1) -- node [right] {$\mathbf{s}_0^{n+\ell} = (\mathbf{s}_0, \dots, \mathbf{s}_{n+\ell})\in(\Sigma_q^{\ell})^{n+\ell+1}$} (p2)  ;
			\draw [->] (p2) -- node [right] {$\mathbf{z}_0^{N} = (\mathbf{z}_{0}, \dots, \mathbf{z}_{N})\in(\Sigma_q^{\ell})^{N+1}$} (end);
		\end{tikzpicture}
		\caption{The adversarial nanopore channel model}
		\label{fig:channel}
	\end{figure}
	
	The central question studied in this work is the recovery of the $q$-ary string $\mathbf{x}$ from the sequence of $\ell$-mers $\mathbf{z}_0^{N}$. 
	Since any two consecutive $\ell$-mers in $\mathbf{s}_0^{n+\ell}$ overlap in $\ell-1$ symbols due to the ISI effect, a natural first step for reconstructing $\mathbf{x}$ is to recover a string $\tilde{\mathbf{y}}$ that is an edited version of $\tilde{\mathbf{x}}$ from $\mathbf{z}_0^{N}$ by utilizing the overlaps. This can be achieved by detecting ``gaps'' in $\mathbf{z}_0^{N}$ and interpolating $\ell$-mers to fill the ``gap'' between adjacent $\ell$-mers. More precisely, for each $i=0,\ldots,N-1$, an integer $\ell_i\in\{1,\ldots,\ell\}$ is called a \emph{gap} between $\mathbf{z}_i$ and $\mathbf{z}_{i+1}$ if the length-$(\ell-\ell_i)$ suffix of $\mathbf{z}_i$ is the same as the length-$(\ell-\ell_i)$ prefix of $\mathbf{z}_{i+1}$.
	The case $\ell_i=1$ corresponds to the ISI overlap of length $\ell-1$, and we call two consecutive $\ell$-mers $\mathbf{z}_{i}$ and $\mathbf{z}_{i+1}$ \emph{consistent} if $\ell_i=1$. A gap between two $\ell$-mers need not be unique. For instance, if the symbols in $\mathbf{z}_{i}$ and $\mathbf{z}_{i+1}$ are all the same, then $\ell_i$ may take any integer between $1$ and $\ell$. On the other hand, there always exists a gap between $\mathbf{z}_{i}$ and $\mathbf{z}_{i+1}$ since $\ell_i=\ell$ imposes no overlap constraint. For $i=0,\ldots,N-1$, define $\ell_i^*$ to be the smallest gap between $\mathbf{z}_i$ and $\mathbf{z}_{i+1}$.
	By detecting the smallest gaps in the any two consecutive $\ell$-mers in $\mathbf{z}_0^N$, we can insert a minimum number of $\ell$-mers to obtain a sequence from $\mathbf{z}_0^N$ in which any adjacent $\ell$-mers are consistent.
	
	\begin{proposition}
		\label{le:consistent}
		For each $i\in\{0,\ldots,N-1\}$, inserting $\ell_i^*-1$ $\ell$-mers between $\mathbf{z}_i$ and $\mathbf{z}_{i+1}$ gives a shortest $\ell$-mer sequence in which any two consecutive $\ell$-mers are consistent.
	\end{proposition}
	\begin{proof}
		Denote by $\mathbf{t}_{i+1}$ the length-$\ell_i^*$ suffix of $\mathbf{z}_{i+1}$. Then the string $(\mathbf{z}_i,\mathbf{t}_{i+1})$ has $\mathbf{z}_i$ as its first $\ell$-mer and $\mathbf{z}_{i+1}$ as its last $\ell$-mer. Its intermediate length-$\ell$ substrings therefore form a consistent $\ell$-mer sequence between $\mathbf{z}_i$ and $\mathbf{z}_{i+1}$, and the number of these intermediate $\ell$-mers is $\ell_i^*-1$. 
		Minimality follows from the definition of $\ell_i^*$.
	\end{proof}
	
	\begin{example}\label{ex:prop1}
		Assume $n=10$, $\ell=4$ and $t=3$. Let $\tilde{\mathbf{x}}=\mathtt{ATGC\ ACTTTGTCCA\ TTGC}$ be the input string, so the two adapter sequences are $\mathbf{s}_0=\mathtt{ATGC}$ and $\mathbf{s}_{14}=\mathtt{TTGC}$. The ISI channel maps $\tilde{\mathbf{x}}$ to
		\[
		\mathbf{s}_0^{14}
		=
		\left(
		\begin{array}{cccccccc}
			\mathtt{ATGC} & \mathtt{TGCA} & \mathtt{GCAC} & \mathtt{CACT} &
			\mathtt{ACTT} & \mathtt{CTTT} & \mathtt{TTTG} & \mathtt{TTGT} \\
			\mathbf{s}_0 & \mathbf{s}_1 & \mathbf{s}_2 & \mathbf{s}_3 &
			\mathbf{s}_4 & \mathbf{s}_5 & \mathbf{s}_6 & \mathbf{s}_7
		\end{array}
		\right.
		\]
		\[
		\left.
		\begin{array}{ccccccc}
			\mathtt{TGTC} & \mathtt{GTCC} & \mathtt{TCCA} & \mathtt{CCAT} &
			\mathtt{CATT} & \mathtt{ATTG} & \mathtt{TTGC} \\
			\mathbf{s}_8 & \mathbf{s}_9 & \mathbf{s}_{10} & \mathbf{s}_{11} &
			\mathbf{s}_{12} & \mathbf{s}_{13} & \mathbf{s}_{14}
		\end{array}
		\right).
		\]
		Suppose the edit channel imposes $t=3$ edits on the input $\ell$-mers $\mathbf{s}_0^{14}$: it deletes $\mathbf{s}_5=\mathtt{CTTT}$ and $\mathbf{s}_6=\mathtt{TTTG}$, and substitutes $\mathbf{s}_{10}=\mathtt{TCCA}$ with $\mathtt{GTAT}$. Therefore, the channel output $\ell$-mers are
		\[
		\mathbf{z}_0^{12}
		=
		\left(
		\begin{array}{ccccccc}
			\mathtt{ATGC} & \mathtt{TGCA} & \mathtt{GCAC} & \mathtt{CACT} &
			\mathtt{ACTT} & \mathtt{TTGT} & \mathtt{TGTC} \\
			\mathbf{z}_0 & \mathbf{z}_1 & \mathbf{z}_2 & \mathbf{z}_3 &
			\mathbf{z}_4 & \mathbf{z}_5 & \mathbf{z}_6
		\end{array}
		\right.
		\]
		\[
		\left.
		\begin{array}{cccccc}
			\mathtt{GTCC} & \mathtt{GTAT} & \mathtt{CCAT} & \mathtt{CATT} &
			\mathtt{ATTG} & \mathtt{TTGC} \\
			\mathbf{z}_7 & \mathbf{z}_8 & \mathbf{z}_9 & \mathbf{z}_{10} &
			\mathbf{z}_{11} & \mathbf{z}_{12}
		\end{array}
		\right).
		\]
		The smallest gaps between consecutive output $\ell$-mers are $(\ell_0^*,\ldots,\ell_{11}^*)=(1, 1, 1, 1, 2, 1, 1, 4, 4, 1, 1, 1)$. Note that only $\ell^*_4$, $\ell^*_7$ and $\ell^*_8$ are larger than $1$.
		Thus, to obtain a sequence of consistent $\ell$-mers, insertions are needed only between the pairs
		\[
		(\mathbf{z}_4,\mathbf{z}_5),\qquad
		(\mathbf{z}_7,\mathbf{z}_8),\qquad
		(\mathbf{z}_8,\mathbf{z}_9).
		\]
		The length-$\ell^*_4$ suffix of $\mathbf{z}_5$ is $\mathbf{t}_5=\mathtt{GT}$. So $(\mathbf{z}_4, \mathbf{t}_5)=\mathtt{ACTTGT}$ and its only intermediate $\ell$-mer is $\mathtt{CTTG}$, which can be inserted between $\mathbf{z}_4$ and $\mathbf{z}_5$. 
		Similarly, we obtain $(\mathbf{z}_7, \mathbf{t}_8) = \mathtt{GTCCGTAT}$ and $(\mathbf{z}_8, \mathbf{t}_9) = \mathtt{GTATCCAT}$. So the interpolated $\ell$-mers between $\mathbf{z}_7$ and $\mathbf{z}_8$ are $\mathtt{TCCG, CCGT, CGTA}$, and the interpolated $\ell$-mers between $\mathbf{z}_8$ and $\mathbf{z}_9$ are $\mathtt{TATC, ATCC, TCCA}$. 
		After inserting these interpolated $\ell$-mers, we obtain the following consistent $\ell$-mer sequence of length $20$:
		\[
		\begin{aligned}
			(&\mathtt{ATGC},\mathtt{TGCA},\mathtt{GCAC},\mathtt{CACT},\mathtt{ACTT},
			\mathtt{CTTG},\mathtt{TTGT},\mathtt{TGTC},\\
			&\mathtt{GTCC},\mathtt{TCCG},\mathtt{CCGT},\mathtt{CGTA},\mathtt{GTAT},
			\mathtt{TATC},\mathtt{ATCC},\\
			&\mathtt{TCCA},\mathtt{CCAT},\mathtt{CATT},\mathtt{ATTG},\mathtt{TTGC}),
		\end{aligned}
		\]
		This example also illustrates that the interpolated $\ell$-mers need not coincide with the $\ell$-mers actually deleted by the channel.
	\end{example}
	
	Applying Proposition~\ref{le:consistent} to $\mathbf{z}_0^N$, we obtained a shortest consistent completion of the output $\ell$-mers. The resulting sequence corresponds to the $\ell$-mers of the $q$-ary string $\tilde{\mathbf{y}}:=(\mathbf{z}_0,\mathbf{t}_{1},\ldots,\mathbf{t}_{N})$ of length $\ell+\sum_{i=0}^{N-1}\ell_i^*$, where for $i=0,\ldots,N-1$, the string $\mathbf{t}_{i+1}$ is the length-$\ell_i^*$ suffix of $\mathbf{z}_{i+1}$.
	This string $\tilde{\mathbf{y}}$ can then be viewed as an edited version of the input string $\tilde{\mathbf{x}}$.
	For simplicity, we denote the mapping from the output $\ell$-mers $\mathbf{z}_0^{N}$ to the string $\tilde{\mathbf{y}}$ by $\psi$ and refer to $\tilde{\mathbf{y}}=\psi(\mathbf{z}_0^{N})$ as the \emph{output string} of the nanopore channel corresponding to the input string $\tilde{\mathbf{x}}$.
	
	It is clear that any inconsistency in $\mathbf{z}_0^{N}$ must result from edits introduced by the channel. Conversely, nontrivial edits may preserve consistency.
	We next characterize the local structure that permits this to happen. The key observation is that such nontrivial edits force periodicity.
	
	\begin{definition}[Periodic patterns]\label{def:periodic}
		A string $\mathbf{s}=(s_1,\ldots,s_n)\in\Sigma_q^n$ is said to have period $T\leq n-1$ if $s_i=s_{i+T}$ for all $i=1,\ldots,n-T$.
		The least period of $\mathbf{s}$ is the smallest positive integer $T$ with this property.
		A substring of \({\mathbf{s}}\) is called a periodic pattern with period $T$ if it has period $T$.
		A periodic pattern with period $T$ in $\mathbf{s}$ is called maximal if it cannot be extended to the left or right within $\mathbf{s}$ to form a longer substring that has the same period $T$.
	\end{definition}
	
	\begin{remark}
		Throughout, a string of length $n$ is allowed to have a	period $T>n/2$. In this case, the string consists of one full period followed by a prefix of the next period. 
	\end{remark}	
	The following theorem of Fine and Wilf \cite{fine1965uniqueness} is helpful for our subsequent discussion on periodic patterns.
	
	\begin{theorem}[{Fine-Wilf theorem \cite[Theorem~1]{fine1965uniqueness}}]\label{thm:FW}
		Let $(f_i)_{i\geq 0},(g_i)_{i\geq 0}$ be two strings over $\Sigma_q$ with period $T_1,T_2$, respectively. If $f_i=g_i$ for $T_1+T_2-\gcd(T_1,T_2)$ consecutive integer $i$, then $f_i=g_i$ for all $i$. The result would be false if $T_1+T_2-\gcd(T_1,T_2)$ were replaced by anything smaller.  
	\end{theorem}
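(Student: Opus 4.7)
The plan is to reduce the theorem to an equivalent \emph{finite} statement: any string $s$ of length $N \geq T_1+T_2-d$ (with $d:=\gcd(T_1,T_2)$) that has periods $T_1$ and $T_2$ must have period $d$. Assuming this finite version, I would take $s$ to be the restriction of $f$ (equivalently $g$) to the $T_1+T_2-d$ indices on which $f$ and $g$ agree; it inherits periods $T_1$ and $T_2$ from its two parent strings. Once $s$ has period $d$, the string $f$ (whose period $T_1$ is a multiple of $d$) agrees with $s$ on a full window of length $T_1$, and hence $f$ itself has period $d$; the same holds for $g$. Since $f$ and $g$ coincide at any index inside the overlap, the shared period $d$ forces them to coincide at every index.

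To prove the finite version I would first reduce to the coprime case $\gcd(T_1,T_2)=1$ by splitting $\{0,\ldots,N-1\}$ into residue classes modulo $d$: both types of period jumps preserve residues, and restricting $s$ to a single residue class and then rescaling indices by $d$ produces a sub-string of length at least $T_1/d+T_2/d-1$ with the coprime periods $T_1/d$ and $T_2/d$. Proving that each such sub-string is constant is exactly equivalent to $s$ having period $d$, so the reduction is lossless.

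For the coprime case I would recast the claim as a graph connectivity problem: form the \emph{period graph} on vertex set $\{0,\ldots,N-1\}$ whose edges are the pairs $\{i,i+T_1\}$ and $\{i,i+T_2\}$ that lie inside the set. Since $s$ takes equal values on the endpoints of every edge, connectivity of this graph forces $s$ to be constant. To establish connectivity when $N\geq T_1+T_2-1$, I would show that every consecutive pair $(i,i+1)$ is joined by a path. Using a Bezout relation $1=aT_1-bT_2$ with minimal nonnegative $a\leq T_2-1$ (and hence $b\leq T_1-2$), one can walk from $i$ to $i+1$ by an alternating zigzag of $+T_1$ and $-T_2$ jumps; if this walk threatens to leave $[0,N-1]$ because $i$ is near one endpoint, the mirror walk that begins with $-T_2$ and then $+T_1$ can be used instead when $i$ is near the opposite endpoint.

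The hard part will be showing that for every index $i$ at least one of these two zigzag walks stays inside $[0,N-1]$ precisely when $N=T_1+T_2-1$, which is the tight case. This requires a careful case analysis tracking the extremal positions attained by each walk and exploiting the minimality of the Bezout coefficients; essentially this is the combinatorial content of Fine--Wilf. Finally, for the optimality clause, I would exhibit a Sturmian/Christoffel-type pair of strings with periods $T_1$ and $T_2$ that agree on exactly $T_1+T_2-d-1$ consecutive indices and differ at the next one, certifying that the threshold in the statement cannot be lowered.
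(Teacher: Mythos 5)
The paper does not prove this theorem: it cites Fine and Wilf (1965) and imports the result as a black box, so there is no in-paper argument against which your proposal can be measured. Judged on its own, the outer layers of your outline are sound. The reduction from the two-infinite-string form to a single finite string $s$ on the agreement window works, because the window has length $T_1 + T_2 - d \geq T_1$, so once $s$ is shown to have period $d := \gcd(T_1,T_2)$, that period propagates to all of $f$ (whose period $T_1$ is a multiple of $d$) and likewise to $g$, and agreement on $\geq d$ consecutive indices then forces $f = g$ everywhere. The reduction to the coprime case by slicing into residue classes mod $d$ is also correct and lossless; note in particular that each class has exactly $N/d = T_1/d + T_2/d - 1$ indices because $d \mid T_1 + T_2 - d$.

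The genuine gap is the one you flag yourself: the assertion that for every $i$ at least one of the two Bezout zigzag walks from $i$ to $i+1$ stays inside $[0, T_1 + T_2 - 2]$ is stated but never established. You give no bound on the running extrema of either walk and no argument that the two walks cover complementary ranges of $i$, and it is not obvious that they do; this claim is exactly the combinatorial heart of Fine--Wilf, so as written the proposal does not prove the theorem. If you want to salvage the graph-connectivity route, a cleaner finish is to check vertex degrees: with $T_1 < T_2$ coprime and $N = T_1 + T_2 - 1$, your period graph has exactly $N-1$ edges, and exactly two vertices (namely $T_1 - 1$ and $T_2 - 1$) have degree one while all others have degree two, so the graph is a single path plus possibly some cycles, and it remains to rule out cycles by a separate residue-counting argument. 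Alternatively, the textbook proof is an induction on $T_1 + T_2$ via the Euclidean step $T_1 \mapsto T_1 - T_2$, which bypasses connectivity entirely. The sharpness half (the Sturmian/Christoffel example) is likewise only named, not constructed, though that is secondary next to the missing connectivity step.
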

	
	\begin{remark}\label{re:FW}
		One direct implication of Theorem~\ref{thm:FW} is that, within a string, two adjacent maximal periodic patterns with period $T_1,T_2$ can overlap in at most $T_1+T_2-\gcd(T_1,T_2)-1$ positions. It also implies that if the length of a maximal periodic pattern is at most $T_1+T_2-\gcd(T_1,T_2)-1$, it is possible that it is contained in another periodic pattern. 
	\end{remark}
	
	\begin{remark}\label{re:FW2}
		If a periodic pattern with period $T$ has length at least $2T-2$, then its least period must divide $T$. This is because if $T'\leq T$ is the least period, then $2T-2\geq T+T'-\gcd(T,T')$. Therefore, by Theorem~\ref{thm:FW}, the pattern also has period $\gcd(T,T')$, which divides $T'$. But $T'$ is the least period so $\gcd(T,T')=T'$ and $T'|T$.
	\end{remark}

	The next lemma describes when a single edit burst\footnote{In this paper, a edit burst always refers to a maximal edit burst.} can preserve consistency.
	
	\begin{lemma}
		\label{le:consistent-edit}
		Let $\tau,\rho \geq 0$ be integers. 
		Let a single edit burst replace $\mathbf{s}_{i+1},\ldots,\mathbf{s}_{i+\tau}$ by $\ell$-mers $\mathbf{e}_1,\ldots,\mathbf{e}_\rho$ while leaving the two boundary $\ell$-mers $\mathbf{s}_i$ and $\mathbf{s}_{i+\tau+1}$ unchanged. 
		Define $\mu=\min\{\tau,\rho\}+1$ and $\delta=|\tau-\rho|$.
		Assume that the local output $\ell$-mers $\mathbf{s}_i,\mathbf{e}_1,\ldots,\mathbf{e}_\rho,
		\mathbf{s}_{i+\tau+1}$ are consistent. 
		\begin{enumerate}[label=(\roman*)]
			\item If $\tau=\rho\leq \ell-1$, then the edit burst is trivial: $\mathbf{e}_j=\mathbf{s}_{i+j}$ for $j=1,\ldots,\rho$.
			\item\label{ite:nontrivial-edit} If $\tau\neq \rho$ and $\mu\leq \ell-1$, then
			the longer of the local input and output strings contains a periodic pattern of length $\delta+\ell-\mu$ with period $\delta$. More precisely, if $\tau>\rho$, this periodic pattern lies in	the input string; if $\tau<\rho$, it lies in the output string.
			Moreover, if $\delta\leq \ell-\mu+2$, then the least period of this periodic pattern divides $\delta$.
		\end{enumerate}
	\end{lemma}
	
	\begin{proof}
		First assume $\tau=\rho\leq\ell-1$. Then the two boundary $\ell$-mers $\mathbf{s}_i,\mathbf{s}_{i+\tau+1}$ of the local output $\ell$-mers $\mathbf{s}_i,\mathbf{e}_1,\ldots,\mathbf{e}_\rho,
		\mathbf{s}_{i+\tau+1}$, together with their relative position shift $\tau+1\leq \ell$, determine a unique local string, formed by $\mathbf{s}_i$ and the length-$(\tau+1)$ suffix of $\mathbf{s}_{i+\tau+1}$.
		This is exactly the local input string corresponding to the local input $\ell$-mers $\mathbf{s}_i,\mathbf{s}_{i+1}\ldots,\mathbf{s}_{i+\tau},\mathbf{s}_{i+\tau+1}$, so the intermediate output $\ell$-mers must satisfy $\mathbf{e}_j=\mathbf{s}_{i+j}$ for all $j=1,\ldots,\rho$. 
		
		Now assume $\tau\neq\rho$ and $\mu\leq\ell-1$. If $\tau>\rho$, then $\rho\leq \ell-2$, and consistency of the local output $\ell$-mers implies that the length-$(\ell-\rho-1)$ suffix of $\mathbf{s}_i$ equals the length-$(\ell-\rho-1)$ prefix of $\mathbf{s}_{i+\tau+1}$, since the relative position shift of the boundary $\ell$-mers in the local output $\ell$-mers is $\rho+1$.
		Since $\mathbf{s}_i=(x_{i-\ell+1},\ldots,x_i)$ and $\mathbf{s}_{i+\tau+1}=(x_{i-\ell+\tau+2},\ldots,x_{i+\tau+1})$, this further implies
		\[
		x_j=x_{j+(\tau-\rho)}=x_{j+\delta},\qquad j=i-\ell+\rho+2,\ldots,i.
		\]
		Therefore, the input string contains the substring $(x_{i-\ell+\rho+2},\ldots,x_{i+\delta})$, which has length $\delta+\ell-\rho-1=\delta+\ell-\mu$ and period $\delta$. 
		The case $\tau<\rho$ is symmetric: using the relative position shift $\tau+1$ of the boundary $\ell$-mers in the local input $\ell$-mers, one can show that the corresponding periodic pattern lies in the output string and has length $\delta+\ell-\tau-1=\delta+\ell-\mu$.
		
		It remains to prove the divisibility statement.
		Since $\delta \leq \ell-\mu +2$, i.e., $\delta-2\leq \ell-\mu$, the length of the periodic pattern satisfies
		\[
		\delta+\ell-\mu\geq 2\delta-2.
		\]
		Therefore, by Remark~\ref{re:FW2}, the least period of the periodic pattern divides $\delta$.
	\end{proof}
	
	\begin{example}
		We keep the same parameters $n=10,\ell=4,t=3$ and the same adapter sequences $\mathbf{s}_0=\mathtt{ATGC},\mathbf{s}_{14}=\mathtt{TTGC}$ as Example~\ref{ex:prop1}. Consider instead the input string $\tilde{\mathbf{x}}=\mathtt{ATGC\ GCTATAGTAT\ TTGC}$ and the local input $\ell$-mers
		\[
		\mathbf{s}_4=\mathtt{GCTA},\quad
		\mathbf{s}_5=\mathtt{CTAT},\quad
		\mathbf{s}_6=\mathtt{TATA},\quad
		\mathbf{s}_7=\mathtt{ATAG},\quad
		\mathbf{s}_8=\mathtt{TAGT}.
		\]
		Suppose an edit burst replace 
		\[
		(\mathbf{s}_5,\mathbf{s}_6,\mathbf{s}_7)
		=
		(\mathtt{CTAT},\mathtt{TATA},\mathtt{ATAG})
		\]
		by the single \(\ell\)-mer
		\[
		\mathbf{e}_1=\mathtt{CTAG}.
		\]
		For example, this can be realized by deleting $\mathtt{CTAT}$ and $\mathtt{TATA}$, and substituting $\mathtt{ATAG}$ with $\mathtt{CTAG}$. The local output $\ell$-mers are then
		\[
		(\mathbf{s}_4,\mathbf{e}_1,\mathbf{s}_8)
		=
		(\mathtt{GCTA},\mathtt{CTAG},\mathtt{TAGT}),
		\] which is consistent. In the notation of Lemma~\ref{le:consistent-edit}, we have $\tau=3$ and $\rho=1$, so $\mu=\min\{\tau, \rho\}+1=2$ and $\delta=|\tau-\rho|=2$. 
		Since $\tau>\rho$, the lemma asserts the existence of a periodic pattern of length $\delta+\ell-\mu=2+4-2=4$ and period $\delta=2$ in the input string. 
		Indeed,  the local input string contains a periodic pattern $\mathtt{TATA}$ of length $4$ and period $2$. Moreover, note that $\delta \le \ell-\mu+2 = 4$ and the least period of $\mathtt{TATA}$ is 2, which divides $\delta=2$.
	\end{example}
	
	In particular, for $t\leq \ell-1$, every nontrivial consistency-preserving burst with $\tau\neq\rho$ satisfies the conditions in Part~\ref{ite:nontrivial-edit} of Lemma~\ref{le:consistent-edit}, and the divisibility condition is automatic whenever $\max\{\tau,\rho\}\leq t$. Indeed, $\mu+\delta=\max\{\tau,\rho\}+1\leq t+1\leq \ell$, so $\delta\leq \ell-\mu$.
	
	\section{Deletion-correcting codes}
	
	In this section, we consider the deletion-only nanopore channel, i.e., the adversarial edit channel is allowed to delete $\ell$-mers but not insert or substitute them, and assume that the number of deletions satisfies $t\leq \ell-1$.
	Recall from Proposition~\ref{le:consistent} that, by using the smallest gaps between adjacent output $\ell$-mers, one can associate $\mathbf{z}_0^N$ with the output string $\tilde{\mathbf{y}}=\psi(\mathbf{z}_0^N)$. Although the interpolated $\ell$-mers used in this procedure need not be the	actual deleted $\ell$-mers, the resulting string
	$\tilde{\mathbf{y}}$ is still a subsequence of the input string $\tilde{\mathbf{x}}$. Moreover, as the adapter $\ell$-mers are anchored, $\tilde{\mathbf{y}}$ begins with $\mathbf{s}_0$ and ends with $\mathbf{s}_{n+\ell}$. Therefore, removing the adapter $\ell$-mers from $\tilde{\mathbf{y}}$ gives a subsequence of $\mathbf{x}$ of length at least $n-t$, since there are at most $t$ deleted $\ell$-mers. This leads to the following useful reduction from the $q$-ary deletion channel (i.e., a channel that takes strings over $\Sigma_q$ as input and deletes at most $t$ symbols) to the deletion-only nanopore channel. 
	
	\begin{proposition}\label{prop:reduction}
		Every $t$-deletion-correcting code $\cC\subset \Sigma_q^n$ for the $q$-ary deletion channel is also a $t$-deletion-correcting code for the deletion-only nanopore channel.
	\end{proposition}
	
	\begin{proof}
		Let $\mathbf{x}\in\cC$, and let \(\mathbf{z}_0^N\) be the output of the deletion-only nanopore channel with input
		$\tilde{\mathbf{x}}=(\mathbf{s}_0,\mathbf{x},\mathbf{s}_{n+\ell})$, where at most $t$ $\ell$-mers are deleted. Let $\tilde{\mathbf{y}}=\psi(\mathbf{z}_0^N)$.	As discussed above, $\tilde{\mathbf{y}}$ is a subsequence of
		$\tilde{\mathbf{x}}$ obtained by deleting at most $t$ symbols.	After removing the two known adapters, we obtain a subsequence of $\mathbf{x}$ obtained by at most $t$ deletions. Since $\cC$ corrects $t$ symbol deletions, $\mathbf{x}$ is uniquely recoverable.
	\end{proof}
	
	We note that the obstruction to unique recovery in the deletion-only nanopore channel is caused by periodicity. Indeed, under the assumption $t\leq\ell-1$, the true gap between any two consecutive output $\ell$-mers is at most $\ell$. If the true gap between $\mathbf{z}_i$ and $\mathbf{z}_{i+1}$ is $L_i>\ell_i^*$, then the corresponding overlap of length $\ell-\ell_i^*$ between $\mathbf{z}_i$ and $\mathbf{z}_{i+1}$ has period $L_i-\ell_i^*>0$.
	Thus, the discrepancy between the true gap and the smallest	gap can occur only inside a periodic pattern. The following lemma makes this	precise.
	
	\begin{lemma}
		\label{le:ambiguous-gap}
		Assume $t\leq \ell-1$ and that $L_i$ is the true gap between $\mathbf{z}_i$ and $\mathbf{z}_{j+1}$. If $L_i>\ell_i^*$, then a substring of $L_i-\ell_i^*$ symbols of $\tilde{\mathbf{x}}$ is deleted in $\tilde{\mathbf{y}}$, and it is contained in a periodic pattern of length $L_i+\ell-2\ell_i^*$ and period $L_i-\ell_i^*$ in $\tilde{\mathbf{x}}$. Moreover, the least period of this substring divides $L_i-\ell_i^*$. In particular, this deletion removes an integer number of least periods.
	\end{lemma}
	
	\begin{proof}
		Assume $\mathbf{z}_{i}=\mathbf{s}_j$ and $\mathbf{z}_{i+1}=\mathbf{s}_{j+L_i}$. Since the the smallest gap is used to obtain the output string $\tilde{\mathbf{y}}$, the substring $(x_{j+1},\ldots,x_{j+L_i-\ell_i^*})$ is deleted from the input string $\tilde{\mathbf{x}}$. Since both $\ell_i^*$ and $L_i$ are gaps between $\mathbf{z}_i$ and $\mathbf{z}_{i+1}$, we have
		\[
		x_a=x_{a+(L_i-\ell_i^*)},\qquad a=j-(\ell-\ell_i^*)+1,\ldots,j.
		\]
		Clearly, $(x_{j-\ell+\ell_i^*+1},\ldots,x_{j+L_i-\ell_i^*})$ has length $L_i-2\ell_i^*+\ell$ and period $L_i-\ell_i^*$, and has $(x_{j+1},\ldots,x_{j+L_i-\ell_i^*})$ as a substring. Since $L_i\leq t+1\leq \ell$, we have $L_i-2\ell_i^*+\ell\geq 2(L_i-\ell_i^*)$. Thus, by Remark~\ref{re:FW2}, its least period divides $L_i-\ell_i^*$. It follows the deletion removes an integer number of least periods.
	\end{proof}

	\begin{example}
		We keep the same parameters $n=10,\ell=4,t=3$ and the same adapter sequences $\mathbf{s}_0=\mathtt{ATGC},\mathbf{s}_{14}=\mathtt{TTGC}$ as Example~\ref{ex:prop1}. Consider the input string $\tilde{\mathbf{x}}=\mathtt{ATGC\ ACTTTTTGTC\ TTGC}$ and the local input $\ell$-mers\[
		\mathbf{s}_4=\mathtt{ACTT},\quad
		\mathbf{s}_5=\mathtt{CTTT},\quad
		\mathbf{s}_6=\mathtt{TTTT},\quad
		\mathbf{s}_7=\mathtt{TTTT},\quad
		\mathbf{s}_8=\mathtt{TTTG}.
		\]
		Suppose the deletion channel deletes
		\[
		(\mathbf{s}_5,\mathbf{s}_6,\mathbf{s}_7)
		=
		(\mathtt{CTTT},\mathtt{TTTT},\mathtt{TTTT}).
		\] So locally we have
		\[
		\mathbf{z}_4=\mathbf{s}_4=\mathtt{ACTT},
		\qquad
		\mathbf{z}_5=\mathbf{s}_8=\mathtt{TTTG}.
		\]
		The true gap between $\mathbf{z}_4$ and $\mathbf{z}_5$ is $L_4=8-4=4$, whereas the smallest gap is $\ell_4^*=2$. The length-$\ell^*_4$ suffix of $\mathbf{z}_5$ is $\mathbf{t}_5=\mathtt{TG}$, so the local output string is $(\mathbf{z}_4, \mathbf{t}_5)=\mathtt{ACTTTG}$.
		Equivalently, compared with the local input string, the consistency recovery (i.e., forming $(\mathbf{z}_4, \mathbf{t}_5)$) effectively deletes $L_4-\ell_4^*=2$ symbols. Moreover, the output string is given by $\tilde{\mathbf{y}}=\mathtt{ATGC\ ACTTTGTC\ TTGC}$. 
		Lemma~\ref{le:ambiguous-gap} asserts that the effective deletion is contained in a periodic pattern of length
		$L_4+\ell-2\ell_4^*=4+4-2\cdot 2=4$ and period $L_4-\ell_4^*=2$ in $\tilde{\mathbf{x}}$.
		Indeed, the relevant substring is $\mathtt{TTTT}$,	whose least period is $1$, which divides $2$. Thus, the effective deletion removes an integer number of least periods.
	\end{example}
	
	There is a second way to view the obstruction to unique recovery. Setting $\rho=0$ in Lemma~\ref{le:consistent-edit}, we have the following lemma that identifies the periodic structure that allows a deletion burst to preserve consistency.
	
	\begin{lemma}
		\label{le:consistent-del}
		Let $\tau \geq 1$ be an integer. 
		Let a single deletion burst remove $\mathbf{s}_{i+1},\ldots,\mathbf{s}_{i+\tau}$ while leaving the two boundary $\ell$-mers $\mathbf{s}_i$ and $\mathbf{s}_{i+\tau+1}$ unchanged. 
		If $\mathbf{s}_i$ and $\mathbf{s}_{i+\tau+1}$ are consistent, then
		the input string contains a periodic pattern of length $\tau+\ell-1$ with period $\tau$. 
		Moreover, if $\tau\leq \ell+1$, then the least period of this periodic pattern divides $\tau$.
	\end{lemma}
	
	\begin{remark}\label{re:cp}
		Lemma~\ref{le:consistent-del} shows that a consistency-preserving deletion burst of length $\tau$ in the output $\ell$-mers induces a deletion burst of the same length $\tau$ within a periodic pattern of the output string. If this pattern has least period $T$ and $\tau\leq \ell+1$, then it is necessary that $T|\tau$, so a consistency-preserving deletion burst of length $\tau$ removes exactly $\tau/T$ complete periods of the pattern. Conversely, an input string that contains a periodic pattern of length $\tau+\ell-1$ with least period $T$ can admit $\ell$-mer deletions that correspond to at most $\lfloor \tau/T\rfloor$ complete least periods in that pattern, while keeping the output $\ell$-mers consistent.
	\end{remark}

	\subsection{Existential bounds on the optimal code size}\label{sec:bounds}
	
	By Proposition~\ref{prop:reduction}, any $t$-deletion-correcting code for the $q$-ary deletion channel can also be used for the deletion-only nanopore channel. 
	Thus, any lower bound on the maximum size of codes for the $q$-ary deletion channel is also a lower bound for the deletion-only nanopore channel. To the best of our knowledge, the best such bound (for the parameter regime considered in this work) is due to Alon \etal \cite{alon2023logarithmically}, yielding the following lower bound on the maximum size of codes for the deletion-only nanopore channel. 
	
	\begin{theorem}[{\cite[Theorem~1]{alon2023logarithmically}}]\label{thm:lb}
		A $t$-deletion-correcting code $\cC\subset\Sigma_q^n$ for the nanopore channel has size $|\cC|=\Omega(q^nn^{-2t}\log n)$.
	\end{theorem}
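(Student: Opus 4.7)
The plan is to derive the bound directly from \cite[Theorem~1]{alon2023logarithmically} via two short steps: a reduction showing that the nanopore channel is a restricted version of the classical $q$-ary deletion channel on $\Sigma_q^{n+2\ell}$, followed by an averaging argument over the choice of adapter sequences.

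First, I would formalize the reduction. Since the output $\ell$-mers $\mathbf{z}_0^{\remainder{n+\ell}}$ are assumed consistent, the definition of $\psi$ together with the injection $\zeta$ shows that the output string $\tilde{\mathbf{y}}=\psi(\mathbf{z}_0^{\remainder{n+\ell}})$ is obtained from the input $\tilde{\mathbf{x}}\in\Sigma_q^{n+2\ell}$ by deleting exactly the $\tau\leq t$ symbols at positions outside the image of $\zeta$. Hence every possible nanopore output of $\tilde{\mathbf{x}}$ is also a possible output of the classical $q$-ary deletion channel applied to $\tilde{\mathbf{x}}$; the nanopore channel is strictly weaker since, by Lemma~\ref{lemma:consistent del} and Remark~\ref{re:cp}, its deletions are additionally constrained to lie within consistency-preserving patterns. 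Consequently, if $\cC^\star\subset\Sigma_q^{n+2\ell}$ is any $t$-deletion-correcting code for the classical $q$-ary deletion channel, then no two distinct codewords of $\cC^\star$ can produce a common nanopore output either.

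Second, I would invoke \cite[Theorem~1]{alon2023logarithmically} at block length $N=n+2\ell$ to obtain such a classical code with $|\cC^\star|=\Omega(q^{N}N^{-2t}\log N)=\Omega(q^{n+2\ell}n^{-2t}\log n)$, where the last equality uses that $\ell$ and $t$ are constants independent of $n$. Finally, since there are only $q^{2\ell}$ possible pairs consisting of a length-$\ell$ prefix and a length-$\ell$ suffix, the pigeonhole principle produces fixed adapters $\mathbf{s}_0,\mathbf{s}_{n+\ell}\in\Sigma_q^\ell$ shared by at least $|\cC^\star|/q^{2\ell}=\Omega(q^n n^{-2t}\log n)$ codewords of $\cC^\star$. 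Stripping these common adapters yields a set $\cC\subset\Sigma_q^n$ of the same cardinality, and by the reduction of the previous paragraph $\cC$ is a $t$-deletion-correcting code for the nanopore channel with adapters $\mathbf{s}_0,\mathbf{s}_{n+\ell}$, which is precisely what the theorem claims. There is no substantive technical obstacle here: the reduction is essentially articulated in the paragraph preceding the theorem statement, and the only step beyond quoting the bound of Alon \etal is the routine pigeonhole averaging that isolates the adapter contribution.
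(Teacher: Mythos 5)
Your proof is correct, but it takes a different route from the paper's. The paper argues more directly: since the adapter $\ell$-mers $\mathbf{s}_0$ and $\mathbf{s}_{n+\ell}$ are never deleted, the first $\ell$ symbols of $\tilde{\mathbf{y}}$ are exactly $\mathbf{s}_0$, and the next $n-\tau$ symbols $(y_1,\dots,y_{n-\tau})$ form a $\tau$-deletion subsequence of $\mathbf{x}$ itself (not merely of $\tilde{\mathbf{x}}$). Hence \emph{any} classical $t$-deletion-correcting code $\cC\subset\Sigma_q^n$ of length $n$ over $\Sigma_q$ is already a $t$-deletion-correcting code for the nanopore channel, with no change of block length and no averaging; invoking Alon \etal at length $n$ finishes the proof. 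Your version instead applies Alon \etal at length $n+2\ell$ and then pigeonholes over the $q^{2\ell}$ adapter pairs to strip a common prefix and suffix. That is valid and self-contained (it only needs the coarse observation that every nanopore output of $\tilde{\mathbf{x}}$ is a classical deletion output of $\tilde{\mathbf{x}}$), but it buys slightly less: the pigeonhole only guarantees a large code for \emph{some} favorable adapter pair, whereas the paper's reduction gives a code of the same size for \emph{every} prescribed adapter pair, which matches the setting of Section~\ref{sec:con} where the adapters are fixed in advance. Since the theorem statement does not quantify over adapters, your argument still establishes it, and the lost constant $q^{2\ell}$ is harmless as $\ell,t,q$ are constants.
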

	
	\begin{remark}
		By Theorem~\ref{thm:lb}, there exist $t$-deletion-correcting codes of length $n$ over $\Sigma_q$ for the nanopore channel with redundancy at most $2t\log_qn-\log_q\log_2 n+O(1)$. However, this result is nonconstructive. In Section~\ref{sec:con}, we give an explicit construction with redundancy \(2t\log_q n+\Theta(\log\log n)\), which matches the bound in Theorem~\ref{thm:lb} to first order.
	\end{remark}
	
	The reduction above goes only in one direction: the deletion-only nanopore channel is weaker than the $q$-ary deletion channel. Hence, upper bounds on the maximum size of classical $q$-ary deletion-correcting codes do not automatically imply	upper bounds for the nanopore channel. We next derive an upper bound on the size of $t$-deletion-correcting codes for deletion-only the nanopore channel. 
	
	As mentioned earlier, ambiguity for unique recovery arises from periodic patterns. For the upper bound, it suffices to consider the simplest periodic patterns. A string with period one is called \emph{constant}. We consider a relaxed setting in which an adversary in effect deletes symbols only from maximal constant substrings of length at least $\ell$ of the input. Such deletions are realizable by the deletion-only nanopore channel: deleting one $\ell$-mer inside a sufficiently long constant region preserves the local $\ell$-mer consistency and corresponds to deleting one symbol from the input string. 
	The largest possible size of a $t$-deletion-correcting code in this relaxed setting then serves as an upper bound on the maximum code size in the original setting. Intuitively, a code for this relaxed setting cannot contain many codewords that have a large number of long maximal constant substrings. 
	Motivated by this observation, we adapt the ideas of Levenshtein \cite{levenshtein1966binary} and use the probabilistic method. Specifically, we view each symbol of the string to be sequenced as i.i.d.\ and uniformly distributed over $\Sigma_q$, and estimate the number of strings with a small number of long maximal constant substrings from this perspective. Furthermore, since the symbols within a maximal constant substring are not independent, we employ Janson's concentration inequality \cite{janson2004large} for dependent random variables to bound the number of such strings. Below we state Janson's concentration inequality and then present our upper bound.
	
	\begin{theorem}[{Janson's concentration inequality \cite[Theorem~2.1]{janson2004large}}]\label{thm:janson}
		Let $X=\sum_{\alpha\in A}Y_{\alpha}$ where $Y_{\alpha}$ is a random variable with $a_{\alpha}\leq Y_{\alpha}\leq b_{\alpha}$ and some real numbers $a_{\alpha}$ and $b_{\alpha}$. Then for $\eta>0$,
		\begin{align*}
			\Pr(X\geq \mathbb{E}X+\eta)\leq \exp\Big(-2\frac{\eta^2}{\chi^*(\Gamma)\sum_{\alpha\in A}(b_{\alpha}-a_{\alpha})^2}\Big),
		\end{align*}
		where $\Gamma$ is a dependency graph for $\{Y_{\alpha}\}_{\alpha\in A}$ and $\chi^*(\Gamma)$ is the fractional chromatic number of $\Gamma$. The same estimate holds for $\Pr(X\leq \mathbb{E}X-\eta)$.
	\end{theorem}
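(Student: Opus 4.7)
The plan is to follow the standard Chernoff argument, exploiting the dependency structure via a fractional coloring of $\Gamma$ combined with generalized H\"older's inequality, essentially as in Janson's original proof. First, I would pass to the centered variables $\tilde Y_\alpha := Y_\alpha - \mathbb{E}Y_\alpha$ and $\tilde X := X - \mathbb{E}X$, so that it suffices to bound $\Pr(\tilde X \geq \eta)$. Exponential Markov gives, for any $s>0$,
\begin{equation*}
\Pr(\tilde X \geq \eta) \leq \exp(-s\eta)\cdot \mathbb{E}\bigl[\exp(s\tilde X)\bigr],
\end{equation*}
and the real task is to control $\mathbb{E}[\exp(s\tilde X)]$ despite the dependence among the $Y_\alpha$.

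Second, I would invoke the fractional chromatic number: by definition there exist independent sets $I_1,\ldots,I_k$ of $\Gamma$ and nonnegative weights $w_1,\ldots,w_k$ with $\sum_j w_j = \chi^*(\Gamma)$ and $\sum_{j:\alpha\in I_j}w_j = 1$ for every vertex $\alpha$. Setting $\tilde X_j := \sum_{\alpha\in I_j}\tilde Y_\alpha$, swapping the order of summation yields $\tilde X = \sum_j w_j \tilde X_j$, so that $\exp(s\tilde X) = \prod_j \exp(sw_j \tilde X_j)$. I would then apply the generalized H\"older inequality with exponents $p_j = \chi^*(\Gamma)/w_j$ (which satisfy $\sum_j 1/p_j = 1$) to obtain
\begin{equation*}
\mathbb{E}\bigl[\exp(s\tilde X)\bigr] \leq \prod_j \Bigl(\mathbb{E}\bigl[\exp(s\chi^*(\Gamma) \tilde X_j)\bigr]\Bigr)^{w_j/\chi^*(\Gamma)}.
\end{equation*}
Crucially, because each $I_j$ is an independent set in the dependency graph, the variables $\{\tilde Y_\alpha\}_{\alpha\in I_j}$ are mutually independent, so the inner expectation factorizes, and Hoeffding's lemma bounds each factor by $\exp\bigl(\tfrac{1}{8}\chi^*(\Gamma)^2 s^2 (b_\alpha-a_\alpha)^2\bigr)$.

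Third, collecting the bounds and using $\sum_j w_j \sum_{\alpha\in I_j}(b_\alpha-a_\alpha)^2 = \sum_\alpha(b_\alpha-a_\alpha)^2$ (another swap of summation using $\sum_{j:\alpha\in I_j}w_j = 1$), I arrive at
\begin{equation*}
\mathbb{E}\bigl[\exp(s\tilde X)\bigr] \leq \exp\Bigl(\tfrac{1}{8}\chi^*(\Gamma)\, s^2 \sum_\alpha (b_\alpha - a_\alpha)^2\Bigr),
\end{equation*}
where the factor $\chi^*(\Gamma)$ appears to the \emph{first} power thanks to one power being absorbed by the H\"older exponents $w_j/\chi^*(\Gamma)$. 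Optimizing over $s$ (with $s = 4\eta/(\chi^*(\Gamma)\sum_\alpha(b_\alpha-a_\alpha)^2)$) yields the claimed upper-tail bound; the lower-tail bound follows by applying the same argument to $-X$. The main obstacle I anticipate is choosing the right convexity-type inequality to exploit the fractional cover: a naive Jensen step on $\exp$ produces $\chi^*(\Gamma)^2$ in the denominator of the final exponent, and only the generalized H\"older step, with exponents dictated precisely by the cover weights, yields the sharper linear dependence on $\chi^*(\Gamma)$ stated in the theorem.
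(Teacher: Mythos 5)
The paper states this result as a black-box citation (\cite[Theorem~2.1]{janson2004large}) and provides no proof of its own, so there is no internal proof to compare against. That said, your argument is correct and faithfully reconstructs Janson's original proof: the passage to centered variables and exponential Markov, the fractional cover $\{(I_j,w_j)\}$ of $\Gamma$ with $\sum_j w_j=\chi^*(\Gamma)$ and $\sum_{j:\alpha\in I_j}w_j=1$, the generalized H\"older step with exponents $p_j=\chi^*(\Gamma)/w_j$, the factorization of the moment generating function over each independent set, Hoeffding's lemma applied to each bounded centered $Y_\alpha$, and the final optimization at $s=4\eta/\big(\chi^*(\Gamma)\sum_\alpha(b_\alpha-a_\alpha)^2\big)$ all check out. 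You are also right about the key point: a naive Jensen step on $\exp$ would give $\chi^*(\Gamma)^2$ in the denominator, and it is precisely the H\"older inequality with cover-dictated exponents that absorbs one factor of $\chi^*(\Gamma)$ and produces the stated linear dependence.
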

	
	\begin{theorem}\label{thm:ub}
		A $t$-deletion-correcting code $\cC\subset\Sigma_q^n$ for the nanopore channel has size $|\cC|=O(q^n n^{-t})$.
	\end{theorem}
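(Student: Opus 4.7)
The plan is to exploit the relaxation sketched in the preceding discussion: restrict the nanopore channel so that it may delete symbols of $\tilde{\mathbf{x}}$ only from within maximal constant substrings (i.e., consistency-preserving patterns of least period $T=1$) of length at least $\ell$. By Remark~\ref{re:cp}, any such deletions yield valid outputs of the unrestricted nanopore channel, so every $t$-deletion-correcting code for the nanopore channel is also $t$-deletion-correcting for this relaxed channel, and it suffices to upper bound the size of codes in the relaxed setting.

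For each codeword $\mathbf{x}\in\cC$, let $K(\mathbf{x})$ denote the number of maximal constant substrings of length at least $\ell$ in $\tilde{\mathbf{x}}$. Choosing any $t$ of these substrings and deleting exactly one symbol from each produces a valid output $\tilde{\mathbf{y}}\in\Sigma_q^{n+2\ell-t}$; distinct choices yield distinct outputs because the tuple of lengths of these maximal substrings differs between them. Since a $t$-deletion-correcting code must have pairwise disjoint sets of outputs for each fixed value of $\tau$ (with $\tau$ recoverable from the output length), focusing on $\tau=t$ yields
\begin{equation*}
\sum_{\mathbf{x}\in\cC}\binom{K(\mathbf{x})}{t}\leq q^{n+2\ell-t}.
\end{equation*}

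The main step is a concentration estimate: for $\mathbf{x}$ drawn uniformly from $\Sigma_q^n$, the quantity $K(\mathbf{x})$ concentrates about its mean $\mu:=\mathbb{E}[K(\mathbf{x})]=\Theta(n)$. Write $K(\mathbf{x})=\sum_i Y_i+O(1)$, where $Y_i$ indicates that position $i$ starts a maximal constant substring of length at least $\ell$, with the $O(1)$ term absorbing boundary effects due to the fixed adapters $\mathbf{s}_0,\mathbf{s}_{n+\ell}$. A direct calculation gives $\Pr[Y_i=1]=(q-1)/q^{\ell}$, hence $\mu=\Theta(n)$. Each $Y_i$ is determined by a window of $\ell+1$ consecutive symbols of $\tilde{\mathbf{x}}$, so the dependency graph $\Gamma$ of $\{Y_i\}$ has maximum degree $O(\ell)=O(1)$, and in particular $\chi^*(\Gamma)=O(1)$. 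Applying Janson's inequality (Theorem~\ref{thm:janson}) with $\eta=\mu/2$ and $\sum_i(b_i-a_i)^2\leq n$ yields $\Pr[K(\mathbf{x})\leq\mu/2]\leq\exp(-\Omega(n))$.

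Finally, split $\cC$ according to whether $K(\mathbf{x})\geq\mu/2$. Codewords with $K(\mathbf{x})\geq\mu/2$ satisfy $\binom{K(\mathbf{x})}{t}\geq\binom{\mu/2}{t}=\Theta(n^t)$, so by the displayed inequality their number is at most $q^{n+2\ell-t}/\Theta(n^t)=O(q^n/n^t)$, using $q^{2\ell-t}=O(1)$. Codewords with $K(\mathbf{x})<\mu/2$ number at most $q^n\exp(-\Omega(n))$ by the concentration bound applied to uniform $\mathbf{x}$. Adding these two bounds gives $|\cC|=O(q^n/n^t)$. The principal obstacle is establishing the Janson-type concentration, which requires a careful but routine bound on the fractional chromatic number of the short-range dependency graph together with the first-moment computation for $\mu$.
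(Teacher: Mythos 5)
Your proposal is correct and follows essentially the same route as the paper: relax the channel to deletions within maximal constant runs of length at least $\ell$, split the code by the number $K(\mathbf{x})$ of such runs, bound the high-$K$ part via the disjointness of deletion balls together with $\binom{K}{t}$ counting, and bound the low-$K$ part via Janson's inequality applied to the indicator variables that a run starts at each position. The only cosmetic differences are that you take the cruder threshold $\mu/2$ (versus the paper's $\mu-\sqrt{(\ell t n\ln n)/2}$, which is chosen to balance the two parts more precisely) and bound $\chi^*(\Gamma)=O(1)$ from the max degree rather than computing $\chi^*=\ell$ exactly via the interval-graph structure; both choices are harmless for the $O(q^n n^{-t})$ conclusion.
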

	\begin{proof}
		Consider a weaker version of the adversarial nanopore channel where deletions only occur within maximal constant substrings of length at least \(\ell\) in $\mathbf{x}$. Denote by $M$ the largest possible size of a $t$-deletion-correcting codes for this weaker channel. Clearly, $|\cC|\leq M$. In the following, we derive an upper bound on $M$. Our derivation is inspired by \cite[Lemma~2]{levenshtein1966binary}. 
		
		Let $\cC^*\subset\Sigma_q^n$ be a $t$-deletion-correcting code for the weaker channel with $|\cC^*|=M$. Partition $\cC^*$ into two subsets $\cC_1,\cC_2$ according to the number of maximal constant substrings in each of the codewords. More precisely, for an string \(\mathbf{x}\in\Sigma_q^n\), let $\norm{\mathbf{x}}$ be the number of maximal constant substrings of length at least $\ell$ in \(\mathbf{x}\). For some integer \(\kappa \geq t\), define
		\begin{align*}
			\cC_1=\{\mathbf{x}\in\cC^*:\norm{\mathbf{x}} > \kappa\},\quad 
			\cC_2=\{\mathbf{x}\in\cC^*:\norm{\mathbf{x}} \leq \kappa\}.
		\end{align*}
		
		Let \(\cS_t(\mathbf{x})\) be the set of all strings of length \(n-t\) that can be obtained from \(\mathbf{x}\) by deleting symbols in maximal constant substrings in \(\mathbf{x}\). Since $\cS_t(\mathbf{x}),\mathbf{x}\in\cC_1$ are disjoint, we have \(\sum_{\mathbf{x}\in\cC_1} \left|\cS_t(\mathbf{x})\right| \leq q^{n-t}\). Moreover, since \(\kappa \geq t\), \(\left|\cS_t(\mathbf{x})\right| \geq \binom{\kappa}{t}>0\). Therefore, we have \(|\cC_1| \leq q^{n-t}/\binom{\kappa}{t}\).
		In the following, we set \(\kappa = \mu-\sqrt{(\ell tn\ln n)/2}\) for sufficiently large $n$ (so that $\kappa\geq t$), where
		\begin{align}
			\mu:=\frac{1+(n-\ell+1)(q-1)}{q^{\ell}}.\label{eq:mu}
		\end{align} 
		As a consequence, we have \(|\cC_1| \leq q^{n-t}/\binom{\kappa}{t} = \Theta(q^{n}n^{-t})\). 
		
		We next bound the size of $\cC_2$. If $\cC_2$ also has size at most $\Theta(q^nn^{-t})$, then $|\cC^*|\leq \Theta(q^nn^{-t})$. Let \(\cD := \{\mathbf{x}\in\Sigma_q^n: \norm{\mathbf{x}} \leq \kappa\}\). Since $\cC_2\subset \cD$, we have \(|\cC_2| \leq |\cD|\).
		We next use a probabilistic approach to show that $|\cD|\leq \Theta(q^nn^{-t})$. 
		Write $\mathbf{x}=(x_1,\ldots,x_n)$. Assume that $x_1,\ldots,x_n$ are i.i.d.\ random variables, each uniformly distributed on \(\Sigma_q\). 
		Then $|\cD|=q^n\Pr(\mathbf{x}\in\cD)$.
		For $\mathbf{x}\in\Sigma_q^n$, define a sequence of indicator random variables \((I_i)_{1\leq i\leq n-\ell+1}\) such that $I_i=1$ if and only if a maximal constant substring of length at least $\ell$ starts at the $i$-th position of $\mathbf{x}$.
		Then we have \(\Pr(\mathbf{x}\in\cD)=\Pr(\norm{\mathbf{x}} \leq \kappa) = \Pr\big(\sum_{i=1}^{n-\ell+1} I_i \leq \kappa\big)\).
		
		In the following, we bound $\Pr\big(\sum_{i=1}^{n-\ell+1} I_i \leq \kappa\big)$ from above using Janson's concentration inequality \cite{janson2004large}.
		To this end, let us compute the expectation of $\sum_{i=1}^{n-\ell+1}I_i$. Note that $$\Pr(I_1=1) = \Pr(x_{1+j} = x_1  \text{ for }  j=1, \dots, \ell-1) = \prod_{i=2}^\ell\Pr(x_i = x_1) = {q^{-\ell+1}}.$$ 
		Moreover, for $2\leq i \leq n-\ell+1 $ we have \[\Pr(I_i=1) = \Pr(x_{i+j} = x_i \text{ for } j=1, \dots, \ell-1 \text{ and } x_i \neq x_{i-1}) = \Pr(x_i \neq x_{i-1})\prod_{i=2}^\ell\Pr(x_i = x_1) = \frac{(q-1)}{{q^{\ell}}}.\] 
		Therefore, $\mathbb{E}\big[\sum_{i=1}^{n-\ell+1}I_i\big] = \sum_{i=1}^{n-\ell+1}\mathbb{E}[I_i] = \mu$, where the last equality follows by \eqref{eq:mu}.
		
		Note that \(I_i\) and \(I_j\) are independent if and only if \(|i-j|\geq\ell\). Let $\cG=(V,E)$ be the interval graph on $n-\ell+1$ vertices where vertex $v_i\in V$ corresponds to closed interval \([i, i+\ell-1]\) and $(v_i, v_j)\in E$ if and only if the corresponding intervals have a non-empty intersection. 
		Therefore, $\cG$ is perfect and its chromatic number equals its clique number, i.e., $\chi(\cG)=\omega(\cG)$ (e.g., see \cite{west2001introduction}).
		Since any \(\ell\) consecutive vertices $v_i, \dots, v_{i+\ell-1}$ form a clique while any set of more than \(\ell\) vertices must include two vertices that are independent and not connected by an edge, we have \(\omega(\cG) = \ell\).
		Note that the fractional chromatic number $\chi^*$ of a graph is bounded between its clique number $\omega$ and chromatic number $\chi$. It follows that $\omega(\cG)\leq \chi^*(\cG)\leq \chi(\cG)=\omega(\cG)$, and therefore, $\chi^*(\cG)=\ell$. 
		Since $\cG$ is a dependency graph for $(I_i)_{1\leq i\leq n-\ell+1}$, by Theorem~\ref{thm:janson} we have
		\begin{align*}
			\Pr\Big(\sum_{i=1}^{n-\ell+1}I_i \leq \kappa \Big) &=\Pr\Big(\sum_{i=1}^{n-\ell+1}I_i \leq \mu -\sqrt{(\ell tn\ln n)/2}\Big)\\
			& = \Pr\Big(\sum_{i=1}^{n-\ell+1}I_i \leq \mathbb{E}\Big[\sum_{i=1}^{n-\ell+1}I_i\Big] - \sqrt{(\ell tn\ln n)/2}\Big)\\
			& \leq \exp\Big(-\frac{tn\ln n}{n-\ell+1}\Big)\\
			& \leq \exp(-t\ln n).
		\end{align*}
		Hence, \(|\cD| = q^{n}\Pr\big(\sum_{i=1}^{n-\ell+1}I_i \leq \kappa\big)\leq \Theta(q^nn^{-t})\). It follows that \(|\cC_2| \leq \Theta(q^nn^{-t})\).
	\end{proof}
	\begin{remark}\label{re:r-lb}
		By Theorem~\ref{thm:ub}, the redundancy of $t$-deletion-correcting codes for the nanopore channel is at least $t\log_q n+\Omega(1)$.
	\end{remark}

	\subsection{Explicit construction based on GRS codes}\label{sec:con}

	In this subsection, we present an explicit construction of deletion-correcting codes for the deletion-only nanopore channel. Throughout this subsection, we assume $\ell-t\geq \max\{t+1,2t-2\}$, i.e., $t\leq \min\{(\ell-1)/2,(\ell+2)/3\}$.
	Under this condition, we construct codes $\cC\subset\Sigma_q^n$ that can correct up to $t$ deletions of $\ell$-mers for the nanopore channel. Specifically, for any input string $\tilde{\mathbf{x}}=(\mathbf{s}_0,\mathbf{x},\mathbf{s}_{n+\ell})$ with $\mathbf{x}\in\cC$, we show that one can recover $\mathbf{x}$ from the output $\ell$-mers $\mathbf{z}_0^{N}$, provided that at most $t$ deletions occur. Before delving into the technical details, below we outline the high-level idea of our construction.
		
	Our idea is based on the observation that the ambiguity in the deletion-only nanopore channel is caused by deletion of periods in periodic patterns. If at least one complete least period in a periodic pattern survives the deletions in the output string, and the length of the pattern in the input string is known \emph{a priori}, then the full pattern can be recovered by extending it using its periodicity. 
	
	As mentioned in Remark~\ref{re:cp}, for a periodic pattern of length $\tau+\ell-1$ with least period $T$ where $\tau\leq t$, it can undergo deletion of at most $\lfloor \tau/T\rfloor$ complete least periods, leaving a substring of length $\tau+\ell-1-\lfloor \tau/T\rfloor T\geq \ell-1$. Thus, at least $\ell-1$ symbols of the periodic pattern survive the deletions. Moreover, under a deletion budget $t<\ell-1$, the true gap between between $\mathbf{z}_i$ and $\mathbf{z}_{i+1}$ satisfies $L_i\leq t+1$. If the true gap is not the same as the smallest gap, i.e., $L_i\geq \ell_i^*+1$, then the overlap created by the smallest gap have length $\ell-\ell_i^*\geq \ell-L_i+1\geq \ell-t$. Therefore, in the worst case, the length-$(\ell-t)$ overlap between the two $\ell$-mers appearing in the output string $\tilde{\mathbf{y}}$ is a substring of a periodic pattern in the input string with least period $T\leq \ell-t$ (cf. Lemma~\ref{le:ambiguous-gap}). Thus, the shortest periodic pattern that must be detected in the output string can have length only $\ell-t$. 
	This motivates us to keep track of the maximal periodic patterns of length at least $\ell-t$ in both the input string $\tilde{\mathbf{x}}$ and the output string $\tilde{\mathbf{y}}$, which we call \emph{check patterns}.
	
	\begin{definition}[Check patterns]\label{def:check}
		A substring of $\mathbf{s}=(s_1,\ldots,s_n)\in\Sigma_q^n$ is a check pattern if it is a maximal periodic pattern with period at most $t$ and length at least \(\ell-t\).
	\end{definition}

	The condition $\ell-t \geq t+1$ ensures that a surviving substring of period at most $t$ is long enough to be detected as periodic, while the condition $\ell-t \geq 2t-2$ ensures that distinct short-period patterns do not overlap too much, as we will see in Lemma~\ref{le:overlap}.
	
	Now if the set of check patterns in the input string $\tilde{\mathbf{x}}$ is the same as in the output string $\tilde{\mathbf{y}}$, up to a reduction in the length of each pattern, then one can establish a one-to-one correspondence between check patterns in $\tilde{\mathbf{x}}$ and $\tilde{\mathbf{y}}$ that matches each check pattern in $\tilde{\mathbf{x}}$ with the same pattern (possibly with reduced length) in $\tilde{\mathbf{y}}$. Thus, if the lengths of the check patterns in $\tilde{\mathbf{x}}$ are known, one can restore the check patterns in $\tilde{\mathbf{y}}$ to their original lengths using their periodicity. We now show that the check patterns present in the input string are preserved in the output string, possibly with reduced length. This follows from the technical lemma below.
	
	\begin{lemma}\label{le:overlap}
		Assume $t\leq \min\{(\ell-1)/2,(\ell+2)/3\}$. Any two adjacent check patterns in a string can overlap in at most $\ell-t-1$ positions.
	\end{lemma}
	
	\begin{proof}
		Since $t\leq (\ell+2)/3$, i.e, $\ell-t\geq 2t-2$, by Theorem~\ref{thm:FW} two adjacent check patterns with different period $T_1,T_2$ can overlap in at most $T_1+T_2-\gcd(T_1,T_2)-1\leq 2t-3 \leq \ell - t -1$ positions. Moreover, if $T_1 = T_2$, then by Theorem~\ref{thm:FW} they can overlap in at most $T_1-1\leq t-1 \leq  \ell -t-1$ positions, where the last inequality follows from $t\leq (\ell-1)/2$. Hence, any two adjacent check patterns in a string can overlap in at most $\ell-t-1$ positions.
	\end{proof}
	
	\begin{lemma}\label{le:one-to-one}
		Assume $t\leq \min\{(\ell-1)/2,(\ell+2)/3\}$. The number of check patterns in $\tilde{\mathbf{x}}$ is the same as in $\tilde{\mathbf{y}}$. In particular, every check pattern in $\tilde{\mathbf{y}}$ arises from exactly one check pattern in $\tilde{\mathbf{x}}$, possibly with reduced length. Moreover, the check patterns in $\tilde{\mathbf{x}}$ start at distinct positions, and the same holds for $\tilde{\mathbf{y}}$.
	\end{lemma} 
	
	\begin{proof}
		As mentioned above, deletion can only remove complete least periods of a periodic pattern in $\tilde{\mathbf{x}}$, leaving a periodic pattern of length at least $\ell-t$. Therefore, each check pattern in $\tilde{\mathbf{y}}$ must come from a check pattern in $\tilde{\mathbf{x}}$, and the number of check pattern in $\tilde{\mathbf{y}}$ is no larger than that in $\tilde{\mathbf{x}}$.
		
		It remains to show that no two check patterns in $\tilde{\mathbf{x}}$ can be combined into a single check pattern in $\tilde{\mathbf{y}}$ through deletion. 
		Note that by Lemma~\ref{le:overlap}, any two adjacent check patterns in a string can overlap in at most $\ell-t-1$ positions. 
		Since a check pattern has length at least $\ell -t$, this implies they cannot start (or end) at the same position, i.e., one pattern cannot be contained in another. For the same reason, this also implies any two adjacent check patterns in $\tilde{\mathbf{x}}$ remain to be distinct in $\tilde{\mathbf{y}}$. So we are left to show that two non-adjacent check patterns cannot be combined by deletion. For two non-adjacent check patterns in $\tilde{\mathbf{x}}$ to form a single check pattern in $\tilde{\mathbf{y}}$, the substring of $\tilde{\mathbf{x}}$ that separates them must be deleted. Moreover, this separating substring must lie within another check pattern in $\tilde{\mathbf{x}}$, which is adjacent to the two non-adjacent check patterns. By the argument above, this adjacent check pattern cannot be merged with either of the two, and thus any two non-adjacent check patterns remain distinct in $\tilde{\mathbf{y}}$.
	\end{proof}
	
	By Lemma~\ref{le:one-to-one}, we can identify each check pattern in $\tilde{\mathbf{x}}$ with a unique check pattern in  $\tilde{\mathbf{y}}$, sequentially from left to right. If the information of the lengths of the check patterns in $\tilde{\mathbf{x}}$ can be embedded in $\tilde{\mathbf{x}}$ and preserved in $\tilde{\mathbf{y}}$, then recovering $\tilde{\mathbf{x}}$ from $\tilde{\mathbf{y}}$ should be straightforward. 
	The remaining task, therefore, is to design a method to encode these lengths in a way that is robust to deletions. To this end, we first examine how many check patterns in $\tilde{\mathbf{y}}$ may have reduced length.
	
	\begin{lemma}\label{le:error}
		Assume $t\leq \min\{(\ell-1)/2,(\ell+2)/3\}$ and that the number of deleted $\ell$-mer is $\tau\leq t$. At most $\tau$ check patterns in $\tilde{\mathbf{y}}$ can have length smaller than their corresponding check patterns in $\tilde{\mathbf{x}}$.
	\end{lemma}
	
	\begin{proof}
		First note that applying the mapping $\psi$ to $\mathbf{z}_0^{N}$ can only shorten the effective symbol-deletion bursts in $\tilde{\mathbf{y}}$. Indeed, let $L_i$ be the actual gap between $\mathbf{z}_i$ and $\mathbf{z}_{i+1}$. Then $L_i\geq \ell_i^*$. Moreover, it follows that the original $\ell$-mer deletion burst of length $L_i-1$ induces an effective symbol-deletion burst of length $L_i-\ell_i^*\leq L_i-1$.
		Consequently, after applying $\psi$, the output string $\tilde{\mathbf{y}}$ may be viewed as being obtained from $\tilde{\mathbf{x}}$ by disjoint effective deletion bursts whose total length,	and hence whose number, is at most $\tau$. Therefore, for the rest of the proof, we may assume that the output is consistent and work with these effective deletion bursts.
		
		By Lemma~\ref{le:one-to-one}, there is a one-to-one correspondence between check patterns in the input string $\tilde{\mathbf{x}}$ and the output string $\tilde{\mathbf{y}}$ that matches each check pattern in $\tilde{\mathbf{x}}$ with the same pattern (possibly with reduced length) in $\tilde{\mathbf{y}}$.
		We next show that a single deletion burst can reduce the length of at most one check pattern in the input string $\tilde{\mathbf{x}}$. 
		
		Consider an effective deletion burst of length $b\leq \tau$ that deletes the symbols $(x_{i+1},\ldots,x_{i+b})$. 
		By Lemma~\ref{le:consistent-del}, the output $\ell$-mers remain consistent if and only if $(x_{i-\ell+2},\dots,x_{i+b})$ is a periodic pattern with least period $T$ satisfying $T\mid b$.
		Let $\mathbf{p}_1$ be the maximal periodic pattern in $\tilde{\mathbf{x}}$ with least period $T$ that contains $(x_{i-\ell+2},\dots,x_{i+b})$. It follows that $\mathbf{p}_1$ is a check pattern.	
		
		Suppose the same burst also shortens the length of another check pattern $\mathbf{p}_2$. Then the burst must overlap $\mathbf{p}_2$, and thus $\mathbf{p}_1$ overlaps $\mathbf{p}_2$. By Lemma~\ref{le:overlap}, the size of the overlap between $\mathbf{p}_1$ and $\mathbf{p}_2$ is at most $\ell-t-1$. 
		Since check patterns have length at least $\ell-t$, it follows that neither $\mathbf{p}_1$ nor $\mathbf{p}_2$ is contained in the other, and thus the overlap between $\mathbf{p}_1$ and $\mathbf{p}_2$ must be a suffix of one and a prefix of the other. 
		Since the length of $\mathbf{p}_1$ is at least $b+\ell-1$, it follows that $\mathbf{p}_1$ contains at least $b+\ell-1-(\ell-t-1) > b$ consecutive symbols that are disjoint from $\mathbf{p}_2$.
		
		Since $\mathbf{p}_1$ has period $T$ and $T\mid b$, deleting any $b$ consecutive symbols in $\mathbf{p}_1$ produces the same resulting string as deleting $(x_{i+1},\ldots,x_{i+b})$. Therefore, one may assume without loss of generality that the burst deletes any $b$ consecutive symbols of $\mathbf{p}_1$. In particular, one may replace the original burst by an equivalent burst of length $b$ that is disjoint from $\mathbf{p}_2$, without changing the output string. Under this equivalent burst, $\mathbf{p}_2$ is unaffected and cannot be shortened by this burst, which is a contradiction.
		
		Therefore, a single deletion burst can shorten at most one check pattern. Since there are at most $\tau$ bursts, at most $\tau$ check patterns in $\tilde{\mathbf{y}}$ can have length smaller than their corresponding check patterns in $\tilde{\mathbf{x}}$.		
	\end{proof}
	
	As a consequence of Lemma~\ref{le:error}, if we record the lengths of the check patterns in $\tilde{\mathbf{x}}$ sequentially in one tuple and those in $\tilde{\mathbf{y}}$ in another, then the Hamming distance between these two tuples is at most $t$. Therefore, a Hamming-type error-correcting code with minimum distance at least $2t+1$ may be used to protect these lengths. 
	To realize this idea, we construct the input string $\tilde{\mathbf{x}}$ in two stages. We first consider its prefix $\bar{\mathbf{x}}:=(\mathbf{s}_0,\mathbf{u})$, where $\mathbf{u}\in\Sigma_q^k$ denotes an information string, and then compute from $\bar{\mathbf{x}}$ the check-pattern information that will be stored in a separate string appended to $\bar{\mathbf{x}}$.
	Note that the discussion above for $\tilde{\mathbf{x}}$ and $\tilde{\mathbf{y}}$ applies equally to $\bar{\mathbf{x}}$ and its corresponding output string $\bar{\mathbf{y}}$ induced by the nanopore channel. 
	By Lemma~\ref{le:one-to-one}, the check patterns all start at distinct positions, and thus the number of check patterns in $\bar{\mathbf{x}}=(\mathbf{s}_0,\mathbf{u})\in\Sigma_q^{k+\ell}$ is at most $k+\ell-(\ell-t)+1=k+t+1$.
	The suffix to be appended to $\bar{\mathbf{x}}$ is obtained by first computing the syndrome of the tuple of check-pattern lengths of $\bar{\mathbf{x}}$ with respect to a parity-check matrix of a GRS code of length $m:=k+t+1$ with distance $2t+1$.
	This syndrome is then further encoded by a $q$-ary $t$-deletion-correcting code from \cite{sima2020optimal2}. Finally, the resulting encoding is appended to $\bar{\mathbf{x}}$, followed by $\mathbf{s}_{n+\ell}$, to form the full input string $\tilde{\mathbf{x}}$. We are now ready to describe our code construction. An illustration of the encoding workflow is shown in Figure~\ref{fig:workflow}.
	\begin{figure}[!t]
		\centering
		\scalebox{0.8}{\begin{tikzpicture}[
				block/.style={draw, rounded corners, thick, fill=white, minimum width=5.2cm, minimum height=0.85cm, align=center, font=\small},
				arrow/.style={->, >=Stealth, shorten >=2pt, shorten <=2pt},
				maplabel/.style={font=\footnotesize, fill=white, inner sep=1pt, midway, right=2pt},
				]
				
				\node[block, fill=blue!8] (xbar) at (0,0) {$\bar{\mathbf{x}} = (\mathbf{s}_0,\mathbf{u})\in\Sigma_q^{k+\ell}$};
				
				\node[block, fill=orange!8] (phixbar) at (0,-1.8) {$\phi(\bar{\mathbf{x}}) \in \{0,\ell-t,\ell-t+1,\dots,k+\ell\}^m$\\ {\footnotesize ($m$-tuple of check-pattern lengths of $\bar{\mathbf{x}}$)}};
				
				\draw[arrow] (xbar) -- (phixbar) node[maplabel] {$\phi \colon \Sigma_q^{\le k+\ell} \to \{0,\ell-t,\ell-t+1,\ldots,k+\ell\}^m$};
				
				\node[block, fill=orange!15] (Phixbar) at (0,-3.6) {$\Phi(\bar{\mathbf{x}}):=(f(\phi(\bar{\mathbf{x}})_1),\ldots,f(\phi(\bar{\mathbf{x}})_m))\in\mathbb{F}_p^m$\\ {\footnotesize (check vector of $\bar{\mathbf{x}}$)}};
				
				\draw[arrow] (phixbar) -- (Phixbar) node[maplabel] {$f\colon\{0, \ell-t,\ell-t+1,\dots, k+\ell\}\to\mathbb{F}_p$};
				
				\node[block, fill=purple!10] (syndrome) at (0,-5.4) {$H \Phi(\bar{\mathbf{x}})^\top \in \mathbb{F}_p^{2t}$\\ {\footnotesize (syndrome of the check vector)}};
				
				\draw[arrow] (Phixbar) -- (syndrome) node[maplabel] {Multiply $\Phi(\bar{\mathbf{x}})$ by a parity-check matrix $H$ of an $(m,m-2t)$ GRS code};
				
				\node[block, fill=green!10] (v) at (0,-7.2) {$\mathbf{v} \in \Sigma_q^{2t\lceil\log_q p\rceil}$\\ {\footnotesize ($q$-ary representation of the syndrome)}};
				
				\draw[arrow] (syndrome) -- (v) node[maplabel] {$g: \mathbb{F}_p \to \Sigma_q^{\lceil\log_q p\rceil}$};
				
				\node[block, fill=green!25] (R) at (0,-9.0) {$(\mathbf{u},\cR_t(\mathbf{s}_0,\mathbf{u})) \in \Sigma_q^n$};
				
				\draw[arrow] (v) -- (R) node[maplabel] {Encode $\mathbf{v}$ into $\cR_t(\mathbf{s}_0,\mathbf{u})$ by a $t$-symbol-deletion-correcting code};
			\end{tikzpicture}
		}
		\caption{Encoding workflow of Construction~\ref{con:rs}.}
		\label{fig:workflow}
	\end{figure}
	
	\begin{construction}
		\label{con:rs}
		Let $\phi \colon \Sigma_q^{\le k+\ell} \to \{0,\ell-t,\ell-t+1,\ldots,k+\ell\}^m$ be a function that maps a $q$-ary string $\mathbf{s}$ of length at most $k+\ell$ to the $m$-tuple of its check pattern lengths, listed in order of their starting positions and padded with zeros if $\mathbf{s}$ contains fewer than $m$ check patterns. Note that each element in the $m$-tuple $\phi(\bar{\mathbf{x}})$ can take $k+t+2$ different values. As we would like to compute the syndrome of $\phi(\bar{\mathbf{x}})$ using a parity-check matrix of a length-$m$ GRS code, the field size of the code should be at least $\max\{k+t+2,m\}=k+t+2$. 
		
		Let $p$ be a prime power such that $p\geq k+t+2$. Let $f\colon\{0, \ell-t,\ell-t+1,\dots, k+\ell\}\to\mathbb{F}_p$ be an injective mapping.
		Furthermore, for a string $\mathbf{s}$ of length at most $k+\ell$, write $\phi(\mathbf{s})=(\phi(\mathbf{s})_1,\ldots,\phi(\mathbf{s})_m)$ and
		define $\Phi \colon \Sigma_q^{\le k+\ell} \to\mathbb{F}_p^m$ by $\Phi(\mathbf{s}):=(f(\phi(\mathbf{s})_1),\ldots,f(\phi(\mathbf{s})_m))$. We refer to $\Phi(\mathbf{s})$ as the \emph{check vector} of $\mathbf{s}$.
		
		Let $\mathbf{u}\in\Sigma_q^k$ be a length-$k$ information string over $\Sigma_q$. Let $H$ be a $2t\times m$ parity-check matrix of an $(m,m-2t)$ GRS code over $\mathbb{F}_p$. We now compute the syndrome of the check vector of $\bar{\mathbf{x}}=(\mathbf{s}_0,\mathbf{u})$ with respect to $H$ and represent it as a $q$-ary string over $\Sigma_q$. Note that the syndrome $H\Phi(\bar{\mathbf{x}})^\top$ is a length-$(2t)$ vector over $\mathbb{F}_p$. 
		Fix an injective mapping $g\colon\mathbb{F}_p\to\Sigma_q^{\lceil \log_q p\rceil}$. Let $\mathbf{v}\in \Sigma_q^{2t\lceil \log_q p\rceil}$ be the $q$-ary string obtained by applying $g$ to each entry of $H\Phi(\bar{\mathbf{x}})^\top$ and concatenating the results.
		
		We next protect the $q$-ary string $\mathbf{v}$ against $t$ deletions. To do so, we encode $\mathbf{v}$ using a $q$-ary $t$-deletion-correcting code, for example, one from \cite{sima2020optimal2}. Denote the resulting encoding of $\mathbf{v}$ by $\cR_t(\bar{\mathbf{x}})=\cR_t(\mathbf{s}_0,\mathbf{u})$. Finally, given the adapter sequences $\mathbf{s}_0$ and $\mathbf{s}_{n+\ell}$, our $t$-deletion-correcting code $\cC$ of length $n$ over $\Sigma_q$ for the nanopore channel is defined to be 
		\begin{align*}
			\cC=\{(\mathbf{u},\cR_t(\mathbf{s}_0,\mathbf{u}))\mid \mathbf{u}\in\Sigma_q^k\}.
		\end{align*}
	\end{construction}
	
	\begin{remark}
		Note that the length of the $q$-ary string $\mathbf{v}$ representing the syndrome of the check vector for $\bar{\mathbf{x}}$ is $2t\lceil \log_q p\rceil\leq 2t\lceil \log_q (2k+2t+4)\rceil$, since there always exists at least one prime number between $k+t+2$ and $2(k+t+2)$.
		If $q\leq 2t\lceil \log_q p\rceil$, which is typically the case for DNA sequencing, then by the results of \cite{sima2020optimal2}, the length of \(\cR_t(\bar{\mathbf{x}})\) is \(2t\log_q k+\Theta(\log \log k)\), assuming $k$ grows while $t,q$ are fixed constants independent of $k$. Therefore, the redundancy of the code $\cC$ is \(2t\log_q n+\Theta(\log\log n)\).
	\end{remark}
	
	We next show that any codeword of the code given in Construction~\ref{con:rs} can be recovered from the output $\ell$-mers of the deletion-only nanopore channel, provided that $\tilde{\mathbf{x}}=(\mathbf{s}_0,\mathbf{x},\mathbf{s}_{n+\ell})$ is the input and at most  $t\leq \min\{(\ell-1)/2,(\ell+2)/3\}$ deletions occur in the output. 
	
	\begin{theorem}\label{thm:con1}
		Assume $t\leq \min\{(\ell-1)/2,(\ell+2)/3\}$.
		The code $\cC\subset\Sigma_q^n$ described in Construction~\ref{con:rs} is a $t$-deletion-correcting code for the nanopore channel with redundancy \(2t\log_q n+\Theta(\log\log n)\).
	\end{theorem} 
	
	\begin{proof}
		Let $\tau\leq t$ be the number of effective symbol deletions in the output string $\tilde{\mathbf{y}}$. So the length of $\tilde{\mathbf{y}}$ is $n+2\ell-\tau$. Write \(\tilde{\mathbf{y}}=(y_{-\ell+1}, \dots, y_{n+\ell-\tau})\).
		Denote the length of $\cR_t(\bar{\mathbf{x}})=\cR_t(\mathbf{s}_0,\mathbf{u})$ by $r$, so the length of $\tilde{\mathbf{x}}$ is \(n+2\ell = k+r+2\ell\). 
		
		The first step of our decoding procedure is to recover the syndrome $H\Phi(\bar{\mathbf{x}})^\top$ over $\mathbb{F}_p$.
		Since $\tau$ symbols of $\tilde{\mathbf{x}}$ are deleted, $\cR_t(\bar{\mathbf{x}})$ can get shifted in $\tilde{\mathbf{y}}$ or deleted by at most $\tau$ positions. Therefore, $(y_{k+1},\dots,y_{k+r-\tau})$ must be a length-$(r-\tau)$ subsequence of $\cR_t(\bar{\mathbf{x}})$. Since $\cR_t(\bar{\mathbf{x}})$ is a codeword in a $q$-ary $t$-deletion-correcting code, we can recover $\cR_t(\bar{\mathbf{x}})$ from $(y_{k+1},\dots,y_{k+r-\tau})$ and further obtain the $q$-ary string $\mathbf{v}$ that represents the syndrome of the check vector $\Phi(\bar{\mathbf{x}})$ with respect to $H$. Applying the mapping $g^{-1}$ to every $\lceil\log_q p\rceil$ symbols of $\mathbf{v}$, we recover the syndrome $H\Phi(\bar{\mathbf{x}})^\top$.

		We next recover the check-pattern lengths $\phi(\bar{\mathbf{x}})$. Let $\hat{\mathbf{y}} = (y_{-\ell+1},\dots,y_{k-\tau})$ be the length-($k+\ell-\tau$) prefix of $\tilde{\mathbf{y}}$. Since there are $\tau$ deletions, $\hat{\mathbf{y}}$ is a subsequence of $\bar{\mathbf{x}}$. 
		Let $\bar{\mathbf{y}}$ be the substring of the output string $\tilde{\mathbf{y}}$ that corresponds to $\bar{\mathbf{x}}$. Observe that $\hat{\mathbf{y}}=\bar{\mathbf{y}}$ if the $\tau$ symbol deletions induced by the nanopore channel all occur in the first $\ell+k$ positions of the input string, but generally $\hat{\mathbf{y}}$ is a prefix of $\bar{\mathbf{y}}$. The relation between different substrings of $\tilde{\mathbf{x}}$ and $\tilde{\mathbf{y}}$ is illustrated in Figure~\ref{fig:relation}.
		\begin{figure}[!t]
			\centering
			\scalebox{0.8}{\begin{tikzpicture}[x=1cm, y=1cm]
					
					\draw[very thick] (0, 4.5) rectangle (12.0, 5.5);
					
					\draw[thick] (6.5, 4.5) -- (6.5, 5.5);
					\draw[thick] (11.0, 4.5) -- (11.0, 5.5);
					
					\fill[blue!15]   (0,    4.5) rectangle (6.5,  5.5);
					\fill[green!20]  (6.5,  4.5) rectangle (11.0, 5.5);
					\fill[orange!15] (11.0, 4.5) rectangle (12.0, 5.5);
					
					\draw[dashed, very thick, gray!60] (5.0, 3.5) -- (6.5, 4.5);
					\draw[dashed, very thick, gray!60] (8.5, 3.5) -- (11, 4.5);
					\node[font=\bfseries] at (3.25, 5.0)  {$\bar{\mathbf{x}}$};
					\node[font=\bfseries] at (9.25, 5.0)  {$\cR_t(\mathbf{s}_0,\mathbf{u})$};
					\node[font=\bfseries] at (11.5, 5.0)  {$\mathbf{s}_{n+\ell}$};
					
					\draw[decorate, decoration={brace, amplitude=5pt}]
					(0,    5.55) -- (6.5,  5.55) node[midway, above=2pt, font=\footnotesize] {$k+\ell$};
					\draw[decorate, decoration={brace, amplitude=5pt}]
					(6.5,  5.55) -- (11.0, 5.55) node[midway, above=2pt, font=\footnotesize] {$r$};
					\draw[decorate, decoration={brace, amplitude=5pt}]
					(11.0, 5.55) -- (12.0, 5.55) node[midway, above=2pt, font=\footnotesize] {$\ell$};
					
					\draw[decorate, decoration={brace, amplitude=5pt, mirror}]
					(0.0, 4.5) -- (12.0, 4.5) node[midway, below=4pt, font=\footnotesize] {$\tilde{\mathbf{x}}$};
					
					
					\draw[very thick] (0, 2.5) rectangle (9.5, 3.5);
					
					\draw[thick] (5.0, 2.5) -- (5.0, 3.5);
					\draw[thick] (8.5, 2.5) -- (8.5, 3.5);
					
					\fill[blue!15]   (0,    2.5) rectangle (5.0, 3.5);
					\fill[green!20]  (5.0,  2.5) rectangle (8.5, 3.5);
					\fill[orange!15] (8.5,  2.5) rectangle (9.5, 3.5);
					
					\node[font=\bfseries] at (2.5,  3.0) {$\bar{\mathbf{y}}$};
					\node[font=\bfseries] at (6.75, 3.15) {$\cR_t(\mathbf{s}_0,\mathbf{u})$};
					\node[font=\footnotesize] at (6.75, 2.8) {(possibly corrupted)};
					\node[font=\bfseries] at (9.0,  3.0) {$\mathbf{s}_{n+\ell}$};
					
					\draw[decorate, decoration={brace, amplitude=5pt}]
					(0,    3.55) -- (5.0, 3.55) node[midway, above=2pt, font=\footnotesize] {$k+\ell-\tau^*$};
					\draw[decorate, decoration={brace, amplitude=5pt}]
					(5.0,  3.55) -- (8.5, 3.55) node[midway, above=2pt, font=\footnotesize] {$r-(\tau-\tau^*)$};
					\draw[decorate, decoration={brace, amplitude=5pt}]
					(8.5,  3.55) -- (9.5, 3.55) node[midway, above=2pt, font=\footnotesize] {$\ell$};
					
					\draw[decorate, decoration={brace, amplitude=5pt, mirror}]
					(0,  2.5) -- (9.5, 2.5) node[midway, below=4pt, font=\footnotesize] {$\tilde{\mathbf{y}}$};
					
					\draw[very thick] (0, 0.0) rectangle (9.5, 1.0);
					
					\draw[thick] (5.0, 0.0) -- (5.0, 1.0);
					\draw[thick] (8.5, 0.0) -- (8.5, 1.0);
					
					\draw[thick] (4.2, 0.0) -- (4.2, 1.0);
					
					\fill[red!18]    (0,    0.0) rectangle (4.2, 1.0);
					\fill[blue!15]   (4.2,  0.0) rectangle (5.0, 1.0);
					\fill[green!20]  (5.0,  0.0) rectangle (8.5, 1.0);
					\fill[orange!15] (8.5,  0.0) rectangle (9.5, 1.0);
					
					\node[font=\bfseries] at (2.1,  0.5)  {$\hat{\mathbf{y}}$};
					\node[font=\bfseries] at (6.75, 0.65)  {$\cR_t(\mathbf{s}_0,\mathbf{u})$};
					\node[font=\footnotesize] at (6.75, 0.3) {(possibly corrupted)};
					\node[font=\bfseries] at (9.0,  0.5)  {$\mathbf{s}_{n+\ell}$};
					
					\draw[decorate, decoration={brace, amplitude=5pt}]
					(0,    1.05) -- (4.2, 1.05) node[midway, above=2pt, font=\footnotesize] {$k+\ell-\tau$};
					\draw[decorate, decoration={brace, amplitude=5pt}]
					(4.2,  1.05) -- (5.0, 1.05) node[midway, above=2pt, font=\footnotesize] {$\tau-\tau^*$};
					\draw[decorate, decoration={brace, amplitude=5pt}]
					(5.0,  1.05) -- (8.5, 1.05) node[midway, above=2pt, font=\footnotesize] {$r-(\tau-\tau^*)$};
					\draw[decorate, decoration={brace, amplitude=5pt}]
					(8.5,  1.05) -- (9.5, 1.05) node[midway, above=2pt, font=\footnotesize] {$\ell$};
					
					\draw[decorate, decoration={brace, amplitude=5pt, mirror}]
					(0,  0) -- (9.5, 0) node[midway, below=4pt, font=\footnotesize] {$\tilde{\mathbf{y}}$};
					
					\draw[dashed, very thick, gray] (5.0, -0.6) -- (5.0, 6.0);
					
				\end{tikzpicture}
				}
    			\caption{Relation between different substrings of the input string $\tilde{\mathbf{x}}$ and the output string $\tilde{\mathbf{y}}$ that are involved in the decoding of Construction~\ref{con:rs}.}
				\label{fig:relation}
		\end{figure}
		Note that Lemma~\ref{le:one-to-one} and \ref{le:error} also hold for $\bar{\mathbf{x}}$ and $\bar{\mathbf{y}}$. Therefore, the Hamming distance between $\phi(\bar{\mathbf{x}})$ and $\phi(\bar{\mathbf{y}})$ is at most $\tau$. 
		Since a check pattern has length at least $\ell-t>t\geq \tau$ and $\hat{\mathbf{y}}$ is a prefix of $\bar{\mathbf{y}}$ whose length is smaller by at most $\tau$, the check patterns in $\hat{\mathbf{y}}$ and $\bar{\mathbf{y}}$ are identical, except possibly for the last one. 
		Specifically, the last check pattern in $\bar{\mathbf{y}}$ may get shortened or destroyed in $\hat{\mathbf{y}}$.
		It then follows that the Hamming distance between $\phi(\bar{\mathbf{x}})$ and $\phi(\hat{\mathbf{y}})$ is at most $\tau$. 
		Noticing that $f$ is an injective mapping, the Hamming distance between $\Phi(\bar{\mathbf{x}})$ and $\Phi(\hat{\mathbf{y}})$ is also at most $\tau$. Recall that $H$ is a parity-check matrix for an $(m,m-2t)$ GRS code over $\mathbb{F}_p$ with minimum distance $2t+1$. Using a variant of the decoders for GRS codes, we can decode $\Phi(\bar{\mathbf{x}})$ from $H\Phi(\hat{\mathbf{y}})^\top-H\Phi(\bar{\mathbf{x}})^\top$, since this quantity is the syndrome of the error vector $\Phi(\hat{\mathbf{y}})-\Phi(\bar{\mathbf{x}})$, which has Hamming weight most $\tau\leq t$. Since $f$ is injective, we can further recover $\phi(\bar{\mathbf{x}})$.
		
		It remains to show that $\bar{\mathbf{x}}$ can be recovered given $\tilde{\mathbf{y}}$ and $\phi(\bar{\mathbf{x}})$. If $\hat{\mathbf{y}}$ is exactly $\bar{\mathbf{y}}$, then by Lemma~\ref{le:ambiguous-gap}, Remark~\ref{re:cp} and Lemma~\ref{le:one-to-one}, $\bar{\mathbf{x}}$ can be found by simply extending each check pattern in $\hat{\mathbf{y}}$ so that its length matches that of its counterpart in $\bar{\mathbf{x}}$. 
		However, $\hat{\mathbf{y}}$ is generally only a prefix of $\bar{\mathbf{y}}$. So a careful analysis is needed to distinguish different cases.
		
		Define $m_{\bar{\mathbf{x}}}$ to be the number of check patterns in $\bar{\mathbf{x}}$, so we have $0\leq m_{\bar{\mathbf{x}}}\leq m$. Similarly, define $m_{\hat{\mathbf{y}}}$ to be the number of check patterns in $\hat{\mathbf{y}}$.  Moreover, we have $m_{\hat{\mathbf{y}}}\in\{m_{\bar{\mathbf{x}}}-1, m_{\bar{\mathbf{x}}}\}$ since the last check pattern in $\bar{\mathbf{y}}$ (and thus in $\bar{\mathbf{x}}$) may get destroyed in $\hat{\mathbf{y}}$.
		
		Consider the case where $m_{\hat{\mathbf{y}}}=m_{\bar{\mathbf{x}}}$. In this case, the last check pattern in $\bar{\mathbf{x}}$ may remain intact or get shortened in the $\hat{\mathbf{y}}$. Therefore, we can obtain a string $\hat{\mathbf{x}}$ from $\hat{\mathbf{y}}$ and $\phi(\bar{\mathbf{x}})$ by extending the $i$-th check pattern in $\hat{\mathbf{y}}$ to be of length $\phi(\bar{\mathbf{x}})_i$ using its periodicity for all $i=1,\ldots, m_{\hat{\mathbf{y}}}$. Since $\hat{\mathbf{y}}$ is a prefix of $\bar{\mathbf{y}}$ obtained by truncating $\bar{\mathbf{y}}$ to its first $\ell+k-\tau$ symbols, 	
		the length of $\hat{\mathbf{x}}$ is no larger than that of $\bar{\mathbf{x}}$. Denote the length $\hat{\mathbf{x}}$ by $\ell+\hat{k}$. If $\hat{k}=k$, then the truncation occurs right at the ending position of $\bar{\mathbf{y}}$, and thus we have $\hat{\mathbf{x}}=\bar{\mathbf{x}}$.
		
		If $\hat{k}<k$, then some symbols in $\bar{\mathbf{y}}$ that follow its last check pattern are removed when forming $\hat{\mathbf{y}}$. In fact, some symbols of the last check pattern in $\bar{\mathbf{y}}$ may also be removed by the truncation. However, in that case these symbols are already recovered in $\hat{\mathbf{x}}$ by extending the last check pattern in $\hat{\mathbf{y}}$ to have length $\phi(\bar{\mathbf{x}})_{m_{\bar{\mathbf{x}}}}$. At the same time, these symbols still appear in the substring $(y_{k+1-\tau},\ldots,y_{n+\ell-\tau})$ that follows $\hat{\mathbf{y}}$. Thus, we need to distinguish these two cases.

		Let $\tau^*\leq \tau$ be the number of symbol deletions in $\bar{\mathbf{x}}$ induced by the nanopore channel to form $\bar{\mathbf{y}}$. Then the length difference between $\bar{\mathbf{y}}$ and $\hat{\mathbf{y}}$ is $\tau-\tau^*$ and we have $0<k-\hat{k}\leq\tau-\tau^*$. If the last check pattern in $\hat{\mathbf{y}}$ stops before the last symbol in $\hat{\mathbf{y}}$, then no symbols of the last check pattern in $\bar{\mathbf{y}}$ are removed by the truncation. It follows that $k-\hat{k}=\tau-\tau^*$ and we have $(\hat{\mathbf{x}},y_{k+1-\tau},\ldots,y_{2k-\hat{k}-\tau})=\bar{\mathbf{x}}$.
		If the last check pattern in $\hat{\mathbf{y}}$ stops at the last symbol in $\hat{\mathbf{y}}$, then we need to find the prefix of $(y_{k+1-\tau},\ldots,y_{n+\ell-\tau})$ that, when appended to $\hat{\mathbf{y}}$, extends its last check pattern in $\hat{\mathbf{y}}$. Note that such a prefix can be easily found by exploiting the periodicity of the last check pattern in $\hat{\mathbf{y}}$. Let $k+s-\tau $ be the ending index of this prefix where $s\geq 1$. Then $s+k-\hat{k}=\tau-\tau^*$ and thus we have $(\hat{\mathbf{x}},y_{k+s+1-\tau},\ldots,y_{2k+s-\hat{k}-\tau})=\bar{\mathbf{x}}$.	
		
		Lastly, consider the case where $m_{\hat{\mathbf{y}}}=m_{\bar{\mathbf{x}}}-1$. In this case, the last check pattern in $\bar{\mathbf{y}}$ is destroyed by truncating $\bar{\mathbf{y}}$ to $\hat{\mathbf{y}}$. More precisely, the truncation reduces its length to less than $\ell-t$. By definition of $m_{\hat{\mathbf{y}}}$, at most the last $\ell-t-1$ symbols of $\hat{\mathbf{y}}$ (and hence $\hat{\mathbf{x}}$) can belong to this check pattern. To identify it, we search for the first check pattern in $(y_{k-\tau-\ell+t+2},\ldots,y_{n+\ell-\tau})$. We then extend the last $\ell-t-1$ symbols of $\hat{\mathbf{x}}$ up to the first check pattern in $(y_{k-\tau-\ell+t+1},\ldots,y_{n+\ell-\tau})$, and subsequently extend the check pattern to have length $\phi(\bar{\mathbf{x}})_{m_{\bar{\mathbf{x}}}}$ using its periodicity. Note that the resulting string still has length at most $\ell+k$. By the same argument as in the case $m_{\hat{\mathbf{y}}}=m_{\bar{\mathbf{x}}}$, the resulting string either has length exactly $\ell+k$, in which case it coincides with $\bar{\mathbf{x}}$, or it can be further extended to obtain $\bar{\mathbf{x}}$.
		The scenarios considered for recovering $\bar{\mathbf{x}}$ from $\tilde{\mathbf{y}}$ and $\phi(\bar{\mathbf{x}})$ are summarized in Figure~\ref{fig:scenarios}.
		\begin{figure}[!t]
			\centering
			\scalebox{0.6}{
				\begin{tikzpicture}
					\node[anchor=east] at (-0.15, 6.0) {$m_{\hat{\mathbf{y}}}=m_{\bar{\mathbf{x}}}$, $k-\hat{k} = \tau-\tau^*$};
					\draw[very thick] (0, 5.2) rectangle (9.0, 6.8);
					
					\draw[thick] (2, 5.2) -- (2, 6.8);
					\draw[thick] (6, 5.2) -- (6, 6.8);
					\draw[thick] (7, 5.2) -- (7, 6.8);
					
					\fill[blue!15]   (7,    5.2) rectangle (9,  6.8);
					\fill[red!18]  (6,  5.2) rectangle (7, 6.8);
					\fill[yellow!30]  (2,  5.2) rectangle (6, 6.8);
					\fill[red!18] (0, 5.2) rectangle (2, 6.8);
					
					\node at (4, 6.0)  {last check pattern of $\bar{\mathbf{y}}$};
					
					\draw[decorate, decoration={brace, amplitude=5pt}]
					(7, 6.85) -- (9,  6.85) node[midway, above=3pt, font=\footnotesize] {$\tau-\tau^*$};
					
					\draw[decorate, decoration={brace, amplitude=5pt, mirror}]
					(0.0, 5.15) -- (7, 5.15) node[midway, below=3pt] {$\hat{\mathbf{y}}$};
					
					\draw [->, very thick] (9.5,6) -- (11.5,6) node [midway, above=3pt] {extend};
					
					\draw[very thick] (12, 5.2) rectangle (21.75, 6.8);
					
					\draw[thick] (14, 5.2) -- (14, 6.8);
					\draw[thick] (18.75, 5.2) -- (18.75, 6.8);
					\draw[thick] (19.75, 5.2) -- (19.75, 6.8);
					
					\fill[blue!15]   (19.75,    5.2) rectangle (21.75,  6.8);
					\fill[red!18]  (18.75,  5.2) rectangle (19.75, 6.8);
					\fill[orange!50]  (14,  5.2) rectangle (18.75, 6.8);
					\fill[red!18] (12, 5.2) rectangle (14, 6.8);
					
					\node at (16.375, 6.0)  {last check pattern of $\bar{\mathbf{x}}$};
					
					\draw[decorate, decoration={brace, amplitude=5pt}]
					(19.75,    6.85) -- (21.75,  6.85) node[midway, above=3pt, font=\footnotesize] {$\tau-\tau^*$};
					
					\draw[decorate, decoration={brace, amplitude=5pt, mirror}]
					(12, 5.15) -- (19.75, 5.15) node[midway, below=3pt] {$\hat{\mathbf{x}}$};
					
					
					\node[anchor=east] at (-0.15, 3.4) {$m_{\hat{\mathbf{y}}}=m_{\bar{\mathbf{x}}}$, $k-\hat{k} < \tau-\tau^*$};
					\draw[very thick] (0, 2.6) rectangle (9, 4.2);
					
					\draw[thick] (3.5, 2.6) -- (3.5, 4.2);
					\draw[thick] (7.5, 2.6) -- (7.5, 4.2);
			
					\fill[blue!15]   (7.5, 2.6) rectangle (9,  4.2);
					\fill[yellow!30]  (3.5,  2.6) rectangle (7.5, 4.2);
					\fill[red!18] (0, 2.6) rectangle (3.5, 4.2);
					
					\node at (5.5, 3.4)  {last check pattern of $\bar{\mathbf{y}}$};
					
					\draw[decorate, decoration={brace, amplitude=5pt}]
					(3.5,    4.25) -- (7,  4.25) node[midway, above=3pt] {$\ge \ell-t$};
					\draw[decorate, decoration={brace, amplitude=5pt}]
					(7,    4.25) -- (7.5,  4.25) node[midway, above=3pt] {$s$};
					\draw[decorate, decoration={brace, amplitude=5pt}]
					(7.5,    4.25) -- (9,  4.25) node[midway, above=3pt] {$\tau-\tau^*-s$};
					
					\draw[decorate, decoration={brace, amplitude=5pt, mirror}]
					(0.0, 2.55) -- (7, 2.55) node[midway, below=3pt] {$\hat{\mathbf{y}}$};
					
					\draw [->, very thick] (9.5, 3.4) -- (11.5,3.4) node [midway, above=3pt] {search} node [midway, below=3pt] {then extend};
					
					\draw[very thick] (12, 2.6) rectangle (22.25, 4.2);
					
					\draw[thick] (15.5, 2.6) -- (15.5, 4.2);
					\draw[thick] (20.25, 2.6) -- (20.25, 4.2);
					\draw[thick] (20.75, 2.6) -- (20.75, 4.2);
					
					\fill[blue!15]   (20.75,    2.6) rectangle (22.25,  4.2);
					\fill[yellow!30]  (20.25,  2.6) rectangle (20.75, 4.2);
					\fill[orange!50]  (15.5,  2.6) rectangle (20.25, 4.2);
					\fill[red!18] (12, 2.6) rectangle (15.5, 4.2);
					
					\node at (17.875, 3.4)  {last check pattern of $\bar{\mathbf{x}}$};
					
					\draw[decorate, decoration={brace, amplitude=5pt}]
					(20.25,    4.25) -- (20.75,  4.25) node[midway, above=3pt] {$s$};
					\draw[decorate, decoration={brace, amplitude=5pt}]
					(20.75,    4.25) -- (22.25,  4.25) node[midway, above=3pt] {$\tau-\tau^*-s$};
					
					\draw[decorate, decoration={brace, amplitude=5pt, mirror}]
					(12, 2.55) -- (20.25, 2.55) node[midway, below=3pt] {$\hat{\mathbf{x}}$};
					
					
					\node[anchor=east] at (-0.15, 0.8) {$m_{\hat{\mathbf{y}}}=m_{\bar{\mathbf{x}}}-1$};
					\draw[very thick] (0, 0) rectangle (9, 1.6);
					
					\draw[thick] (4, 0) -- (4, 1.6);
					\draw[thick] (8, 0) -- (8, 1.6);
					
					\fill[blue!15]   (8, 0) rectangle (9,  1.6);
					\fill[yellow!30]  (4,  0) rectangle (8, 1.6);
					\fill[red!18] (0, 0) rectangle (4, 1.6);
					
					\node at (6, 0.8)  {last check pattern of $\bar{\mathbf{y}}$};
					
					\draw[decorate, decoration={brace, amplitude=5pt}]
					(4,    1.65) -- (7,  1.65) node[midway, above=3pt] {$< \ell-t$};
					\draw[decorate, decoration={brace, amplitude=5pt}]
					(7,    1.65) -- (8,  1.65) node[midway, above=3pt] {$s$};
					\draw[decorate, decoration={brace, amplitude=5pt}]
					(8,    1.65) -- (9,  1.65) node[midway, above=3pt] {$\tau-\tau^*-s$};
					
					\draw[decorate, decoration={brace, amplitude=5pt, mirror}]
					(0.0, -0.05) -- (7, -0.05) node[midway, below=3pt] {$\hat{\mathbf{y}}$};
					
					\draw[dashed, very thick, gray] (7.0, -0.6) -- (7.0, 7.3) node[above, color=black] {the right end of $\hat{\mathbf{y}}$ in $\bar{\mathbf{y}}$};
					
					\draw [->, very thick] (9.5, 0.8) -- (11.5,0.8) node [midway, above=3pt] {search} node [midway, below=3pt] {then extend};
					
					\draw[very thick] (12, 0) rectangle (21.75, 1.6);
					
					\draw[thick] (16, 0) -- (16, 1.6);
					\draw[thick] (20.75, 0) -- (20.75, 1.6);
					
					\fill[blue!15]   (20.75,    0) rectangle (21.75,  1.6);
					\fill[orange!50]  (16,  0) rectangle (20.75, 1.6);
					\fill[red!18] (12, 0) rectangle (16, 1.6);
					
					\node at (18.375, 0.8)  {last check pattern of $\bar{\mathbf{x}}$};
					
					\draw[decorate, decoration={brace, amplitude=3pt}]
					(20.75,    1.65) -- (21.75,  1.65) node[midway, above=3pt] {$\tau-\tau^*-s$};
					
					\draw[decorate, decoration={brace, amplitude=3pt, mirror}]
					(12, -0.05) -- (20.75, -0.05) node[midway, below=3pt] {$\hat{\mathbf{x}}$};
					\draw[dashed, very thick, gray] (19.0, -0.6) -- (19.0, 7.3) node[above, color=black] {};
				\end{tikzpicture}
			}
			\caption{Possible scenarios for recovering $\bar{\mathbf{x}}$ from $\tilde{\mathbf{y}}$ and $\phi(\bar{\mathbf{x}})$ considered in Theorem~\ref{thm:con1}.}
			\label{fig:scenarios}
		\end{figure}
		After recovering $\bar{\mathbf{x}}$, the string $\mathbf{x}$ can be found by removing the adapter sequence $\mathbf{s}_0$.
	\end{proof}
	
	\begin{remark}
		For constant $\ell$ and $q$, the encoding and decoding procedures above both run in time linear in $n$, provided that the string $\cR(\bar{\mathbf{x}})$ is generated using a $t$-deletion-correcting code from \cite{sima2020optimal2} in the small-$q$ regime. More precisely, computing the check-pattern lengths of $\bar{\mathbf{x}}$ and the syndrome $H\Phi(\bar{\mathbf{x}})^T$ takes $O(n)$ time, whereas applying the encoder of \cite{sima2020optimal2} to form the length-$O(\log n)$ string $\cR(\bar{\mathbf{x}})$ contributes $O((\log n)^{2t})$ time. Likewise, decoding the string $\cR(\bar{\mathbf{x}})$ also takes $O((\log n)^{2t})$ time by \cite{sima2020optimal2}. The remaining decoding  steps, including identifying the check patterns in $\hat{\mathbf{y}}$, decoding the length-$\Theta(n)$ GRS syndrome with a constant number of errors, and reconstructing $\bar{\mathbf{x}}$, each require $O(n)$ time.
	\end{remark}
	
	\begin{remark}
		We stress that the assumption that the adapter sequences are known is essential for decoding.
		If the left adapter $\mathbf{s}_0$ were unknown, then any deleted symbols in the portion of $\mathbf{s}_0$ preceding the leftmost check pattern of the input would be impossible to recover from the channel output, and an analogous ambiguity would occur at the right end if $\mathbf{s}_{n+\ell}$ were not known.
		Moreover, even if the leftmost check pattern starts at the first symbol of $\mathbf{s}_0$, 
		deleting an initial segment whose length is not a multiple of its least period may still yield a consistent output but shift the phase of the periodic pattern, thereby rendering it impossible to determine the pattern's original starting point.
	\end{remark}
	
	\subsection{Explicit construction based on Sidon sets}
	In the GRS-based construction above, the changes in the check-pattern lengths are treated as an arbitrary Hamming error vector of weight at most $t$. However, in the deletion-only setting, the changes in the check-pattern lengths are not arbitrary. In fact, the error vector is a nonnegative integer vector whose entries have a bounded sum. More precisely, if the $j$-th check pattern is shortened by $\tau_j$ symbols, then $\tau_j\geq 0$ and $\sum_j \tau_j\leq t$.
	Thus, a decoder does not need to correct an arbitrary error vector of weight at most $t$. It only needs to identify a multiset of at most $t$ check-pattern indices, where the index $j$ appears $\tau_j$ times. Once the indices of the shortened check patterns are known, together with their multiplicities given by the number of deleted symbols in each pattern, the original string can be recovered by exploiting the periodicity of those patterns.

	This observation leads to more economical redundancy. Instead of storing a GRS syndrome of the check-pattern lengths of $\bar{\mathbf{x}}=(\mathbf{s}_0,\mathbf{u})\in\Sigma_q^{\ell+k}$, we assign to each possible check-pattern index $j$ a group element $w_j$, and store a single checksum (which we also call a digest) obtained by summing $w_j$ over all symbols belonging to the assigned $j$-th check pattern. When deletions shorten check patterns, the checksum changes by $\sum_{j=1}^{m} \tau_j w_j$, where $m=k+t+1$ is the largest possible number of check patterns in $\bar{\mathbf{x}}$. Therefore, if the sequence $w_1,\ldots,w_m$ is chosen so that all sums of at most $t$ terms are distinct, then this checksum difference uniquely determines the shortened check-pattern indices together with their multiplicities.
	
	We next recall that the notion of Sidon sets, or rather $B_t$ sequences, which is instrumental in implementing the idea above. A sequence $w_1,\ldots,w_m$ in an abelian group $G$ is called a $B_t$ sequence if all multisets of $t$ terms have distinct sums. For the purposes of the code construction here, we will be interested in the so-called $B_{\leq t}$ sequences. Specifically, a sequence $w_1,\ldots,w_m\in G$ is called $B_{\leq t}$ sequence, if all multisets of at most $t$ terms have distinct sums. In other words, for a $B_{\leq t}$ sequence $w_1,\ldots,w_m\in G$, if
	\[
	\sum_{\nu=1}^{\tau} w_{i_\nu}
	=
	\sum_{\nu=1}^{\tau'} w_{j_\nu},
	\qquad
	\tau,\tau'\le t,
	\qquad
	i_{\nu},j_{\nu}\in\{1,\ldots,m\},
	\]
	then $\tau=\tau'$ and the two multisets $\{i_1,\ldots,i_{\tau}\}$ and $\{j_1,\ldots,j_{\tau}\}$	are identical. For fixed $t$, explicit $B_{\leq t}$ sequences of length $m$ exist in groups of size $O(m^t)$. For instance, one may start from an explicit $B_t$ sequence (e.g., the Bose-Chowla construction \cite{bose1962theorems}) and add one coordinate to record the number of summands. 
	\begin{proposition}\label{prop:sidon}
		Let $p$ be a prime power and let $A\subset\mathbb{Z}_{p^t-1}$ be a Bose-Chowla $B_t$ sequence. Define 
		\begin{align*}
			\tilde{A}=\{(\alpha,1)\mid \alpha\in A\}\subset \mathbb{Z}_{p^t-1}\times \mathbb{Z}_{t+1}.
		\end{align*}
		Then $\tilde{A}$ is a $B_{\leq t}$ sequence that has $p$ elements.
	\end{proposition}
	
	\begin{proof}
		By the Bose-Chowla construction, $A$ has $p$ elements. Write $\tilde{A}=\{(\alpha_1,1),\ldots,(\alpha_p,1)\}$.
		Let $0\leq \tau,\tau'\leq t$.  If 
		\begin{align*}
			\sum_{\nu=1}^{\tau} (\alpha_{i_\nu},1) = \sum_{\nu=1}^{\tau'} (\alpha_{j_\nu},1),
		\end{align*} then $\tau=\tau'$ and $\sum_{\nu=1}^{\tau} \alpha_{i_\nu} = \sum_{\nu=1}^{\tau} \alpha_{j_\nu}$. Since a $B_t$ sequence is also a $B_{\tau}$ sequence for $\tau\leq t$. It follows that $\tilde{A}$ is a $B_{\leq t}$ sequence.
	\end{proof}
	
	By Proposition~\ref{prop:sidon}, assuming $p\geq m$, we may take an $m$-subset of $\tilde{A}$ to form the $B_{\leq t}$ sequence $w_1,\ldots,w_m$. Then each $w_j$ is a nonzero element in the abelian group $G:=\mathbb{Z}_{p^t-1}\times \mathbb{Z}_{t+1}$.
	
	We now define the checksum (or digest) used in the construction. 
	For a string $\mathbf{s}$, let $m_{\mathbf{s}}$ be the number of check patterns in $\mathbf{s}$. Let $a_j(\mathbf{s})$ and $e_j(\mathbf{s})$ denote the starting and ending indices of the $j$-th check pattern, respectively. 
	By Lemma~\ref{le:overlap}, any two adjacent check patterns can overlap in at most $\ell-t-1$ indices. In particular, all check patterns have distinct starting indices.
	To avoid double-counting overlaps between adjacent check patterns (as we would like to assign a unique element in $G$ to each check-pattern symbol), define
	\[
	b_j(\mathbf{s})=
	\begin{cases}
		\min\{a_{j+1}(\mathbf{s})-1,e_j(\mathbf{s})\}, & 1\leq j<m_{\mathbf{s}},\\
		e_{m_{\mathbf{s}}}(\mathbf{s}), & j=m_{\mathbf{s}}.
	\end{cases}
	\]
	Thus, the interval $(a_j(\mathbf{s}),a_j(\mathbf{s})+1,\ldots,b_j(\mathbf{s}))$ is indices \emph{assigned} to the $j$-th check pattern, and overlaps are assigned to the later check pattern. In other words, the interval need not contain all indices of the $j$-th check pattern.
	Define $B_{\leq t}$-digest of $\mathbf{s}$ to be 
	\begin{align}
		h(\mathbf{s}):=
		\sum_{j=1}^{m_{\mathbf{s}}}
		\bigl(b_j(\mathbf{s})-a_j(\mathbf{s})+1\bigr)w_j
		\in G.\label{eq:Bt-digest}
	\end{align}
	
	Recall that Lemma~\ref{le:one-to-one} and \ref{le:error} apply to $\bar{\mathbf{x}}$ and its corresponding output string $\bar{\mathbf{y}}$. Therefore, by Lemma~\ref{le:error}, at most $t$ check patterns are shortened in $\bar{\mathbf{y}}$. If $\tau_j$	symbols are deleted from the substring assigned to the $j$-th check pattern, and the corresponding check pattern is not destroyed (i.e., its remaining length is still at least $\ell-t$), then
	\[
	h(\bar{\mathbf{x}})-h(\bar{\mathbf{y}})
	=\sum_j \tau_j w_j,	\qquad	\sum_j \tau_j\leq t.
	\]
	The $B_{\leq t}$-property therefore allows the decoder to recover all multiplicities $\tau_j$. Once these multiplicities are known, the shortened	check patterns are restored by periodic extension.
	As in the GRS construction, the digest $h(\bar{\mathbf{x}})$ is protected by a short $q$-ary deletion-correcting code and appended after $\bar{\mathbf{x}}$. Since the decoding procedure will work with a truncated prefix of the output string (as in the proof of Theorem~\ref{thm:con1}), the last check pattern of $\bar{\mathbf{y}}$ may be destroyed by truncation. We therefore also store the parity of $m_{\bar{\mathbf{x}}}$, allowing the decoder to determine whether the final check pattern is present or missing in the truncated prefix. The resulting construction is given next.

	\begin{construction}
		\label{con:Bt}
		Let $p\geq m$ be a prime power, where $m=k+t+1$ is the largest possible number of check patterns in $\bar{\mathbf{x}}=(\mathbf{s}_0,\mathbf{u})\in\Sigma_q^{k+\ell}$.
		Let $w_1,\ldots,w_m\in G=\mathbb{Z}_{p^t-1}\times \mathbb{Z}_{t+1}$ be a $B_{\leq t}$ sequence constructed by Proposition~\ref{prop:sidon}.
		
		Compute $h(\bar{\mathbf{x}})\in G$ and $\sigma:=m_{\bar{\mathbf{x}}}\bmod 2$. Let $\mathbf{v}$ be a fixed-length $q$-ary representation of $(\sigma,h(\bar{\mathbf{x}}))$. Encode $\mathbf{v}$ using a $q$-ary $t$-deletion-correcting code, for example, one from \cite{sima2020optimal2}, and denote the result by
		$\cR_t(\bar{\mathbf{x}})=\cR_t(\mathbf{s}_0,\mathbf{u})$.
		Given the fixed adapter sequences $\mathbf{s}_0$ and $\mathbf{s}_{n+\ell}$,
		the $t$-deletion-correcting code $\cC$ of length $n$ over $\Sigma_q$ for the nanopore channel is defined to be
		\begin{align*}
			\cC=\{(\mathbf{u},\cR_t(\mathbf{s}_0,\mathbf{u}))\mid \mathbf{u}\in\Sigma_q^k\}.
		\end{align*}
	\end{construction}
	
	\begin{remark}\label{re:Bt}
		Since $|G|=O(m^t)$, the unprotected	digest $\mathbf{v}$ has length $\log_q |G|+O(1)=t\log_q m+O(1)=t\log_q k+O(1)$, where the last equality follows since $m=k+t+1=k+O(1)$.
		In the regime of fixed $q$ and growing $k$, by the result of \cite{sima2020optimal2}, protecting $\mathbf{v}$ by a $q$-ary $t$-deletion-correcting code adds $\Theta(\log\log k)$ symbols. Therefore, the length of $\cR_{t}(\bar{\mathbf{x}})$ is $t\log_q k+\Theta(\log\log k)=t\log_q n+\Theta(\log\log n)$. This matches the smallest possible redundancy up to first order, according to Remark~\ref{re:r-lb}.
	\end{remark}
	
	\begin{theorem}
		\label{thm:Bt}
		Assume $t\leq \min\{(\ell-1)/2,(\ell+2)/3\}$.
		The code $\cC\subset\Sigma_q^n$ described in Construction~\ref{con:Bt} is a $t$-deletion-correcting code for the nanopore channel with redundancy $t\log_q n+\Theta(\log\log n)$.
	\end{theorem}
	
	\begin{proof}
		We describe a decoder that recovers the input string $\mathbf{x}$ from the output $\ell$-mers $\mathbf{z}_0^N$.
		The decoder first applies the mapping $\psi$ to $\mathbf{z}_0^N$ and obtains the output string $\tilde{\mathbf{y}}$. 
		Let $\tau$ be the number of effective symbol deletions in $\tilde{\mathbf{y}}$, and write $\tilde{\mathbf{y}}=(y_{-\ell+1},\ldots,y_{n+\ell-\tau})$. 
		As in Construction~\ref{con:Bt}, the suffix $\cR_{t}(\bar{\mathbf{x}})$ can be recovered from the corresponding subsequence of $\tilde{\mathbf{y}}$, since it is encoded by a $q$-ary $t$-deletion-correcting code. Hence, the decoder obtains the parity $\sigma$ of $m_{\bar{\mathbf{x}}}$ and the $B_{\leq t}$-digest $h(\bar{\mathbf{x}})$ of the string $\bar{\mathbf{x}}$.
		
		Let $\bar{\mathbf{y}}$ be the substring of $\tilde{\mathbf{y}}$ induced by $\bar{\mathbf{x}}$. As discussed before, Lemma~\ref{le:one-to-one} and \ref{le:error} also hold for $\bar{\mathbf{x}}$ and $\bar{\mathbf{y}}$. 
		By Lemma~\ref{le:one-to-one}, check patterns in $\bar{\mathbf{x}}$ and $\bar{\mathbf{y}}$ are matched from left to right, and we have $m_{\bar{\mathbf{x}}}=m_{\bar{\mathbf{y}}}$. Furthermore, by Lemma~\ref{le:error}, at most $\tau\leq t$ check patterns in $\bar{\mathbf{y}}$ are shortened. 
		For $s=0,\ldots,\tau$, define
		\[
		\hat{\mathbf{y}}^{(s)}=(y_{-\ell+1},\ldots,y_{k-\tau+s}).
		\]
		Then $\bar{\mathbf{y}}$ is equal to $\hat{\mathbf{y}}^{(s^*)}$ for some $s^*\in\{0,\ldots,\tau\}$. Define
		\[
		D_s=h(\bar{\mathbf{x}})-h(\hat{\mathbf{y}}^{(s)})\in G.
		\]
		The decoder chooses the smallest $s$ such that $m_{\hat{\mathbf{y}}^{(s)}}\equiv \sigma \pmod{2}$ and that $D_s$ can be represented by a sum of $\tau-s$ elements from the $B_{\leq t}$-sequence $\{w_1,\ldots,w_m\}$. More precisely, for such an $s$, we have $D_s = \sum_{j=1}^m \tau_j w_j\in G$, where $\sum_{j=1}^m\tau_j=\tau-s\leq t$.
		Since $w_1,\ldots,w_m$ is a $B_{\leq t}$ sequence, $D_s$ uniquely determines the multiset
		\[
		\{\underbrace{j,\ldots,j}_{\tau_j\text{ times}}: \tau_j\geq 0\}.
		\]
		
		We next show that the decoder can restore $\bar{\mathbf{x}}$ from $\hat{\mathbf{y}}^{(s)}$ with the smallest  $s$ that passes the above test for the parity and $B_{\leq t}$-digest. 
		Since $s^*$ passes the test, the smallest passing $s$ satisfies $s\leq s^*$. If $s=s^*$, then $\hat{\mathbf{y}}^{(s)}=\bar{\mathbf{y}}$, and the difference $D_{s}$ is the sum of the number of effective symbols deletions inside each check pattern of $\bar{\mathbf{x}}$. Thus, $\bar{\mathbf{x}}$ can be recovered by extending each shortened check pattern in $\bar{\mathbf{y}}$ by $\tau_j$ symbols according to its periodicity.
		
		Now assume $s<s^*$. Then $\hat{\mathbf{y}}^{(s)}$ is obtained from $\bar{\mathbf{y}}$ by deleting a suffix of length $s^*-s\leq t$. Since check patterns have length at least $\ell-t>t$, this truncation can destroy only the last check pattern. The parity test
		$m_{\mathbf{y}^{(s)}}\equiv \sigma\pmod{2}$ rules out the case in which the last check pattern is destroyed. Thus, all check patterns are still present in $\mathbf{y}^{(s)}$.
		Since $s^*-s\leq t<\ell-t$, the omitted length-$(s^*-s)$ suffix either lies entirely inside the last check pattern or contains at least one symbol outside the last check pattern (and hence all check patterns). According to \eqref{eq:Bt-digest}, symbols outside check patterns contribute zero to the $B_{\leq t}$-digest. If any omitted symbol contributed zero, then $D_s$ would be a sum of $\sum_{j=1}^m\tau_j<\tau-s$ elements of the $B_{\leq t}$ sequence, which is a contradiction. Therefore, for a passing $s<s^*$, every omitted symbol lies inside the last check pattern. Therefore, periodic extension based on $\{\tau_j:j=1,\ldots,m\}$ restores the omitted suffix as well as the deletions, yielding $\bar{\mathbf{x}}$.

		Hence, in all cases the decoder recovers $\bar{\mathbf{x}}$ and thus also $\mathbf{x}$.
	\end{proof}
	
	\begin{remark}
		Construction~\ref{con:Bt} is explicit for fixed $t$. Indeed, the check patterns of $\bar{\mathbf{x}}$ can be found in linear time for fixed $\ell,t$. The $B_{\leq t}$ sequence can be constructed explicitly from the Bose-Chowla construction and Proposition~\ref{prop:sidon} over a group of size $O(m^t)=O(n^t)$. The sequence
		$(w_1,\ldots,w_m)$, as well as a lookup table mapping every sum of at most $t$ terms to the corresponding multiset of indices, can be precomputed in $O(m^t)$ time and space. Once this preprocessing is done, encoding requires time linear in $n$. Decoding consists of the output-string and check-pattern reconstruction steps as before, together with one lookup in the $B_{\le t}$ table. Thus, for constant $t,\ell,q$, the encoder and decoder run in polynomial time; with the $B_{\le t}$ table precomputed, the complexity is essentially linear in $n$, apart from the $\mathrm{polylog}(n)$-time encoding/decoding of the short redundancy block.
	\end{remark}
	
	As a comparison, we note that the $B_{\leq t}$-based construction naturally uses a lookup table of size $O(n^t)$ if one wants fast decoding of the checksum difference. In contrast, the GRS-based construction can be decoded using variants of GRS decoders without precomputing a lookup table.
	
	\section{Edit-correcting codes}\label{sec:edit}
	We now consider the edit version of the nanopore channel, in which the channel may introduce insertions, deletions, and substitutions of $\ell$-mers. We continue to assume that the two adapter $\ell$-mers are known and anchored.
	
	Let $\isi_{\ell}(\tilde{\mathbf{x}})=(\mathbf{s}_0,\ldots,\mathbf{s}_{n+\ell})$	denote the $\ell$-mer sequence generated from the input string $\tilde{\mathbf{x}}=(\mathbf{s}_0,\mathbf{x},\mathbf{s}_{n+\ell})$.
	Write $\de(\cdot,\cdot)$ for the edit distance between two $\ell$-mer sequences, where insertions, deletions, and substitutions of $\ell$-mers each have cost one.
	
	We begin with a simple reduction that shows that any code correcting $2t$ $\ell$-mer deletions also corrects $t$ $\ell$-mer edits. This is the nanopore analogue of the standard deletion-to-edit reduction.
	
	\begin{proposition}
		\label{prop:del-to-edit}
		If a code $\cC\subset \Sigma_q^n$ corrects $2t$ deletions of $\ell$-mers in the nanopore channel, then it also corrects $t$ edits of $\ell$-mers.
	\end{proposition}
	
	\begin{proof}
		Toward a contradiction, suppose that two distinct codewords $\mathbf{x},\mathbf{x}'\in\cC$ can produce the same output $\ell$-mer sequence $\mathbf{z}_0^N$ after at most $t$ edits each. By the triangle inequality for edit distance, the corresponding input strings $\tilde{\mathbf{x}},\tilde{\mathbf{x}}'$ then satisfy 
		\[
		\de\big(\isi_\ell(\tilde{\mathbf{x}}),\isi_\ell(\tilde{\mathbf{x}}')\big)
		\leq 2t.
		\]
		Replacing each substitution by one deletion and one insertion leads to at most $4t$ insertions and deletions in total. Since the two $\ell$-mer sequences $\isi_\ell(\tilde{\mathbf{x}}),\isi_\ell(\tilde{\mathbf{x}}')$ have the same length $n+\ell+1$, this implies that their longest common subsequence has length at least $n+\ell+1-2t$. Equivalently, the two codewords have a common $\ell$-mer subsequence obtainable	from each by deleting at most $2t$ $\ell$-mers. This contradicts the assumption that $\cC$ corrects $2t$ $\ell$-mer deletions.
	\end{proof}
	
	By Proposition~\ref{prop:del-to-edit}, one may obtain $t$-edit-correcting codes for the nanopore channel by applying any $2t$-deletion-correcting code. As a consequence, Construction~\ref{con:Bt} gives $t$-edit-correcting codes with redundancy $2t\log_q n+\Theta(\log\log n)$. This black-box reduction is useful, but it leads to larger redundancy than necessary, as $t$-edit-correcting codes are not necessary $2t$-deletion-correcting codes. We next give a direct construction that exploits the extra structure of $\ell$-mer edit ambiguities.
	
	\subsection{A direct construction using new check patterns}
	
	For the structural arguments below, we first recall the notion of an \emph{edit alignment} between two sequences of the same length. Let $\mathbf{a}=(a_1,\ldots,a_n)$ and $\mathbf{b}=(b_1,\ldots,b_n)$ be two sequences over an alphabet $\cA$. An edit alignment of $\mathbf{a}$ and $\mathbf{b}$ is a pair of equal-length sequences $\mathbf{a}'=(a'_1,\ldots,a'_M),\mathbf{b}'=(b'_1,\ldots,b'_M)$ over the alphabet $\cA\cup\{*\}$ such that deleting all gap symbols $*$ from $\mathbf{a}'$ and $\mathbf{b}'$ gives $\mathbf{a}$ and $\mathbf{b}$, respectively, and $(a'_i,b'_i)\neq(*,*)$ for all $i$. A \emph{column} $(a'_i,b'_i)$ has cost $0$ if $a'_i=b'_i\in\cA$, and has cost $1$ otherwise. Thus, a positive-cost column with two non-gap symbols is a substitution, while a positive-cost column containing one gap symbol is an insertion/deletion. The edit distance is the minimum total cost over all edit alignments.
	
	A zero-cost column is called a matched column. A maximal consecutive block of positive-cost columns, bounded on both sides by matched columns, will be called an \emph{alignment edit burst}. In our discussion below, the two sequences are ISI outputs, so the matched boundary columns consist of equal $\ell$-mers. Since we assume all ISI output $\ell$-mer sequences start with the same left adapter $\ell$-mer and end with the same right adapter $\ell$-mer, we may choose an optimal alignment in which these two adapter $\ell$-mers are matched. 
	
	The direct construction below follows the same philosophy as the deletion-correcting constructions, except for a different defining length of the check patterns.
	
	\begin{definition}[Edit check patterns]
		\label{def:edit-check}
		A substring of $\mathbf{s}=(s_1,\ldots,s_n)\in\Sigma_q^n$ is a check pattern if it is a maximal periodic pattern with period at most $2t$ and length at least \(\ell-2t\).
	\end{definition}
	
	We assume throughout this subsection that $\ell-2t\geq \max\{2t+1,4t-2\}$, i.e., $t\leq \min\{(\ell-1)/4,(\ell+2)/6\}$.
	The choice of edit check-pattern length is forced by the following intuition. If the $\ell$-mer sequences $\isi_\ell(\tilde{\mathbf{x}}),\isi_\ell(\tilde{\mathbf{x}}')$ of two candidate codewords both lie within $t$ edits of the output $\ell$-mer sequence, then $\isi_\ell(\tilde{\mathbf{x}})$ and $\isi_\ell(\tilde{\mathbf{x}}')$ are within $2t$ edits of each other. 
	Consider an alignment edit burst between two matched boundary $\ell$-mers. If these two boundary $\ell$-mers are shifted by gaps $g_1$ and $g_2$ in the two sequences, then the cost of the burst is at least $\max\{g_1-1,g_2-1\}$. Since the edit distance is at most $2t$, the cost is at most $2t$. Thus, $g_1,g_2\leq 2t+1$.
	If $g_1=g_2$, by Lemma~\ref{le:consistent-edit}, the burst is trivial, and every nontrivial burst has $g_1\ne g_2$. Writing \(g_1<g_2\), the two gaps force the overlap corresponding to the smaller gap to have period $g_2-g_1\leq 2t$ and length at least $\ell-g_1\geq\ell-g_2+1\geq \ell-2t$.
	Therefore, every local edit ambiguity is witnessed by a periodic string with period at most $2t$ and length at least $\ell-2t$. This motivates defining edit check patterns as maximal periodic patterns with period at most
	$2t$ and length at least $\ell-2t$. 
	The condition $\ell-2t\geq 2t+1$ ensures that such a pattern is long enough to reveal a	period as large as $2t$, while the condition $\ell-2t\geq 4t-2$ ensures,	via Theorem~\ref{thm:FW}, that adjacent edit check patterns overlap in at most $\ell-2t-1$ symbols. Since every edit check pattern has length at least $\ell-2t$, no edit check pattern can contain another. In particular, all edit check patterns in a string start at distinct positions, and thus they can be indexed unambiguously from left to right.
	
	The construction below is obtained by modifying Construction~\ref{con:Bt} so as	to store a digest of the edit check patterns. By a slight abuse of notation, we	write $m_{\mathbf{s}}$ for the number of edit check patterns in a string
	$\mathbf{s}$, and we write $m$ for the maximum possible number of edit check patterns in $\bar{\mathbf{x}}=(\mathbf{s}_0,\mathbf{u})$.
	
	\begin{construction}
		\label{con:edit-Bt}
		Let $p\geq m$ be a prime power, where $m=k+2t+1$ is the largest possible number of edit check patterns in $\bar{\mathbf{x}}=(\mathbf{s}_0,\mathbf{u})\in\Sigma_q^{k+\ell}$.
		Let $w_1,\ldots,w_m\in G=\mathbb{Z}_{p^t-1}\times \mathbb{Z}_{t+1}$ be a $B_{\leq t}$ sequence constructed by Proposition~\ref{prop:sidon}.
		
		Compute $h(\bar{\mathbf{x}})\in G$ and $\sigma:=m_{\bar{\mathbf{x}}}\bmod 2$. Let $\mathbf{v}$ be a fixed-length $q$-ary representation of $(\sigma,h(\bar{\mathbf{x}}))$. Encode $\mathbf{v}$ using a $q$-ary $t$-edit-correcting code, for example, a $2t$-deletion-correcting code from \cite{sima2020optimal2}, and denote the result by
		$\cR_{2t}(\bar{\mathbf{x}})=\cR_{2t}(\mathbf{s}_0,\mathbf{u})$.
		Given the fixed adapter sequences $\mathbf{s}_0$ and $\mathbf{s}_{n+\ell}$,
		the $t$-edit-correcting code $\cC$ of length $n$ over $\Sigma_q$ for the nanopore channel is defined to be
		\begin{align*}
			\cC=\{(\mathbf{u},\cR_{2t}(\mathbf{s}_0,\mathbf{u}))\mid \mathbf{u}\in\Sigma_q^k\}.
		\end{align*}
	\end{construction}
	
	\begin{remark}
		The redundancy of the codes in Construction~\ref{con:edit-Bt} can be estimated similarly as in Remark~\ref{re:Bt}. 
		Since $t,\ell,q$ are fixed, for growing $k$, the length of $\cR_{2t}(\bar{\mathbf{x}})$ equals $t\log_q n+\Theta(\log\log n)$. Thus, the redundancy of Theorem~\ref{thm:edit-Bt} is optimal to first order. Indeed,	the $t$-edit nanopore channel contains the $t$-deletion nanopore channel as a special case. Hence, any code that corrects $t$ edits of $\ell$-mers must correct $t$ deletions of $\ell$-mers. By Remark~\ref{re:r-lb}, every $t$-deletion-correcting code for the nanopore channel has redundancy at least $t\log_q n+\Omega(1)$.
		Therefore, every $t$-edit-correcting code satisfies the same lower bound. It follows that the redundancy of the codes in Construction~\ref{con:edit-Bt} matches the lower bound in the leading term and is first-order optimal.
	\end{remark}	
	
	To prove that Construction~\ref{con:edit-Bt} corrects $t$ edits, we first	record two technical lemmas. 
	The first one relates the edit distance between $\ell$-mer sequences to the edit distance between the corresponding strings that generate them. 
	In general, a single edit in an $\ell$-mer sequence may correspond to several symbol edits in the underlying string. However, when the $\ell$-mer edit distance is small, each $\ell$-mer	edit can be charged to at most one $q$-ary symbol edit. Throughout, we denote the edit distance between two $q$-ary strings by $\de^{(q)}(\cdot,\cdot)$.
	
	\begin{lemma}
		\label{le:lmer-to-symbol-edit}
		Let $\tilde{\mathbf{x}},\tilde{\mathbf{x}}'\in\Sigma_q^{n+2\ell}$ be two input strings with the same anchored adapter $\ell$-mers. Let $D\leq\ell-1$ be an integer. If
		$
		\de
		\big(
		\isi_\ell(\tilde{\mathbf{x}}),
		\isi_\ell(\tilde{\mathbf{x}}')
		\big)
		\leq D,
		$
		then
		$
		\de^{(q)}(\tilde{\mathbf{x}},\tilde{\mathbf{x}}')\leq D.
		$
	\end{lemma}
	
	\begin{proof}
		Take an optimal edit alignment between $\isi_\ell(\tilde{\mathbf{x}})$ and $\isi_\ell(\tilde{\mathbf{x}}')$ in which the two adapter $\ell$-mers are matched. Decompose the alignment into alignment edit bursts.
		Consider one such burst, bounded by the same two matched boundary $\ell$-mers. Assume that the burst  contains $\tau$ \(\ell\)-mers between the boundary $\ell$-mers in $\isi_\ell(\tilde{\mathbf{x}})$ and $\rho$ $\ell$-mers between the boundaries in $\isi_\ell(\tilde{\mathbf{x}}')$. The cost of this burst is at least $\max\{\tau,\rho\}$. Since the edit distance is at most $D\leq\ell-1$, we have $\tau,\rho\leq \ell-1$.
		
		If $\tau=\rho$, then Lemma~\ref{le:consistent-edit} implies that the burst is trivial: the two local $\ell$-mer sequences between the boundary $\ell$-mers are identical. Thus, this burst contributes no $q$-ary symbol edits.
		
		If $\tau\ne\rho$, then the two boundary $\ell$-mers are shifted by different gaps $\tau+1$ and $\rho+1$, and the two gaps give rise to two corresponding local $q$-ary strings of different length. Moreover, the longer one is obtained from the shorter one by inserting $|\tau-\rho|$ symbols in the periodic pattern described in Lemma~\ref{le:consistent-edit}. Hence, the two local $q$-ary strings differ by at most $|\tau-\rho|$ $q$-ary symbol insertions or deletions. Since $|\tau-\rho|\leq \max\{\tau,\rho\}$, the $q$-ary edit cost of this local burst is at most its $\ell$-mer edit cost.
		Summing over all alignment edit bursts gives $\de^{(q)}(\tilde{\mathbf{x}},\tilde{\mathbf{x}}')\leq D$.
	\end{proof}
	
	Although edit distance is not generally monotone under coordinate restriction, the next lemma shows that edit distance between two equal-length strings is monotone under restriction to a fixed contiguous coordinate window.
	
	\begin{lemma}
		\label{le:fixed-block-edit}
		Let $\tilde{\mathbf{x}},\tilde{\mathbf{x}}'\in\Sigma_q^{n+2\ell}$ with $\tilde{\mathbf{x}}=(\mathbf{s}_0,\mathbf{u},\mathbf{r},\mathbf{s}_{n+\ell})$ and $\tilde{\mathbf{x}}'=(\mathbf{s}_0,\mathbf{u}',\mathbf{r}',\mathbf{s}_{n+\ell})$, where $\mathbf{u},\mathbf{u}'\in\Sigma_q^k$ and $\mathbf{r},\mathbf{r}'\in\Sigma_q^{n-k}$.
		Then
		\[
		\de^{(q)}(\mathbf{r},\mathbf{r}')\leq\de^{(q)}(\tilde{\mathbf{x}},\tilde{\mathbf{x}}'),\qquad 
		\de^{(q)}((\mathbf{s}_0,\mathbf{u}),(\mathbf{s}_0,\mathbf{u}'))\leq\de^{(q)}(\tilde{\mathbf{x}},\tilde{\mathbf{x}}').
		\]
	\end{lemma}
	
	\begin{proof}
		For this proof, we index the coordinates of $\tilde{\mathbf{x}}$ and $\tilde{\mathbf{x}}'$ starting from $1$.
		Take an optimal edit alignment of $\tilde{\mathbf{x}}$ and $\tilde{\mathbf{x}}'$. Since the two strings have the same length $n+2\ell$, the number of deletions equals the number of insertions. Denote this number by $\delta$, and let $s$ be the number of substitutions. Then $s=\de^{(q)}(\tilde{\mathbf{x}},\tilde{\mathbf{x}}')-2\delta$.
		
		The alignment pairs $(n+2\ell)-\delta$ symbols of $\tilde{\mathbf{x}}$ with $(n+2\ell)-\delta$ symbols of
		$\tilde{\mathbf{x}}'$ in an order-preserving way. Write these paired coordinates as
		\[
		i_1<i_2<\cdots<i_{n+2\ell-\delta},\qquad
		j_1<j_2<\cdots<j_{n+2\ell-\delta},
		\]
		where $i_{\nu}$ is paired with $j_{\nu}$.
		Let $I=\{a,\ldots,b\}$ be the fixed coordinate interval such that the restriction of $\tilde{\mathbf{x}}$ to $I$ is $\mathbf{r}$, i.e., $\tilde{\mathbf{x}}_I=\mathbf{r}$. Then $\tilde{\mathbf{x}}'_I=\mathbf{r}'$. 
		If $|I|\leq \delta$, then
		\[
		\de^{(q)}(\mathbf{r},\mathbf{r}')
		\leq |I|
		\leq 2\delta+s
		=
		\de^{(q)}(\tilde{\mathbf{x}},\tilde{\mathbf{x}}'),
		\]
		so the claim is immediate.
		
		Assume now that $|I|>\delta$. For every $\nu\in\{a,a+1,\dots,b-\delta\}$, we have
		\[
		a\leq \nu\leq i_\nu\leq \nu+\delta\leq b,
		\qquad
		a\leq \nu\leq j_\nu\leq \nu+\delta\leq b.
		\]
		Thus, the alignment contains at least $b-\delta-a+1=|I|-\delta$ paired columns whose two coordinates both lie in $I$.
		Restrict the alignment to these paired columns lying in $I\times I$, keeping their matched and substitution columns, and align all remaining symbols of $\mathbf{r}$ and $\mathbf{r}'$ to gaps. Consider the resulting alignment of $\mathbf{r}$ and $\mathbf{r}'$. 
		Since at least $|I|-\delta$ symbols	on each side of this alignment are paired inside $I\times I$, at most $\delta$ symbols remain
		unpaired on each side. Moreover, the number of substitution columns retained is at most $s$.
		Therefore,
		\[
		\de^{(q)}(\mathbf{r},\mathbf{r}')
		=\de^{(q)}(\tilde{\mathbf{x}}_I,\tilde{\mathbf{x}}'_I)
		\leq 2\delta+s
		= \de^{(q)}(\tilde{\mathbf{x}},\tilde{\mathbf{x}}').
		\]
		Lastly, $\de^{(q)}((\mathbf{s}_0,\mathbf{u}),(\mathbf{s}_0,\mathbf{u}'))\leq\de^{(q)}(\tilde{\mathbf{x}},\tilde{\mathbf{x}}')$ follows by taking $I$ to be such that $\tilde{\mathbf{x}}_I=(\mathbf{s}_0,\mathbf{u})$.
	\end{proof}
	
	\begin{theorem}
		\label{thm:edit-Bt}
		Assume $t\leq \min\{(\ell-1)/4,(\ell+2)/6\}$.
		The code $\cC$ in Construction~\ref{con:edit-Bt} is a $t$-edit-correcting code for the
		nanopore channel with redundancy $t\log_q n+O(\log\log n)$.
	\end{theorem}
	
	\begin{proof}
		Given an output $\ell$-mer sequence $\mathbf{z}_0^N$, 
		the decoder enumerates all subsequences obtained by discarding at most $t$ output $\ell$-mers, and keeps only those subsequences whose length lies between $n+\ell+1-t$ and $n+\ell+1$ and whose first and last $\ell$-mers are the two adapters $\mathbf{s}_0$ and $\mathbf{s}_{n+\ell}$.		
		For each retained subsequence, the decoder applies the mapping $\psi$ and keeps the resulting string only if its length lies between $n+2\ell-t$ and $n+2\ell$. It then invokes the $t$-deletion decoder (as in the proof Theorem~\ref{thm:Bt}) with edit check patterns in place of check patterns. This produces a candidate input string, denoted by $\tilde{\mathbf{x}}'=(\mathbf{s}_0,\mathbf{x}',\mathbf{s}_{n+\ell})$. Finally, the decoder verifies whether
		\[
		\de\big(\isi_\ell(\tilde{\mathbf{x}}'),	\mathbf{z}_0^N\big)\leq t.
		\]
		Candidates failing this verification are discarded.
		The true input string $\tilde{\mathbf{x}}$ (and hence the true codeword $\mathbf{x}$) appears in the enumeration, since it can be obtained by decoding the subsequence of $\mathbf{z}_0^N$ with the inserted and substituted output $\ell$-mers discarded.
		
		It remains to show that at most one candidate passes the edit distance verification. Suppose two candidates $\tilde{\mathbf{x}}=(\bar{\mathbf{x}},\cR_{2t}(\bar{\mathbf{x}}),\mathbf{s}_{n+\ell})$ and $\tilde{\mathbf{x}}'=(\bar{\mathbf{x}}',\cR_{2t}(\bar{\mathbf{x}}'),\mathbf{s}_{n+\ell})$ both pass. Since both are within $t$ edits of $\mathbf{z}_0^N$, the triangle inequality gives 
		\[
		\de\big(\isi_\ell(\tilde{\mathbf{x}}),\isi_\ell(\tilde{\mathbf{x}}')\big)\leq 2t.
		\]	
		Under the assumption $t\leq (\ell-1)/4$, we have $2t<\ell$. Thus, Lemma~\ref{le:lmer-to-symbol-edit} and Lemma~\ref{le:fixed-block-edit} imply that the redundancy blocks satisfy
		\[
		\de^{(q)}(\cR_{2t}(\bar{\mathbf{x}}),\cR_{2t}(\bar{\mathbf{x}}'))\leq 2t.
		\]
		Since these redundancy blocks are codewords of a $q$-ary $t$-edit-correcting code, we have $\cR_{2t}(\bar{\mathbf{x}})=\cR_{2t}(\bar{\mathbf{x}}')$ and thus 
		\[
		h(\bar{\mathbf{x}})=h(\bar{\mathbf{x}}').
		\]

		We now compare the two prefixes $\bar{\mathbf{x}}$ and $\bar{\mathbf{x}}'$.
		Consider an edit alignment between $\isi_\ell(\bar{\mathbf{x}})$ and $\isi_\ell(\bar{\mathbf{x}}')$ of cost at most $2t$, obtained by restricting an optimal edit alignment between $\isi_\ell(\tilde{\mathbf{x}})$ and $\isi_\ell(\tilde{\mathbf{x}}')$ to the prefix portions.
		Decompose the alignment into edit alignment bursts.	Each edit alignment burst between two matched boundary $\ell$-mers connects the same endpoints with two
		different feasible gaps. Since the gap is at most $2t+1<\ell$, if the gaps are equal, then by Lemma~\ref{le:consistent} the intermediate $\ell$-mers are uniquely determined and the burst would be trivial. 
		Thus, the gaps should be distinct. By Lemma~\ref{le:consistent} again, they force a periodic pattern of period at most $2t$ and length at least $\ell-2t$. Hence, in this case, the burst changes the length of one edit check pattern. As mentioned earlier, the condition $t\leq (\ell+2)/6$ ensures that one burst cannot ambiguously merge or split
		several edit check patterns.
		
		Let $\tau_j\in\mathbb{Z}$ be the signed change in the length of the $j$-th edit check pattern when passing from $\bar{\mathbf{x}}$ to $\bar{\mathbf{x}}'$. Since the two strings $\bar{\mathbf{x}},\bar{\mathbf{x}}'$ have the same length, by Lemma~\ref{le:lmer-to-symbol-edit} and \ref{le:fixed-block-edit}, the edit distance is at most $2t$. It follows that the total positive and negative changes satisfy
		\[
		\sum_{\tau_j>0} \tau_j\leq t,
		\qquad
		\sum_{\tau_j<0} (-\tau_j)\leq t.
		\]
		Moreover,
		\[
		0=h(\bar{\mathbf{x}}')-h(\bar{\mathbf{x}})
		=
		\sum_j \tau_j w_j.
		\]
		It follows that
		\begin{align}
		\sum_{\tau_j>0} \tau_j w_j
		=
		\sum_{\tau_j<0} (-\tau_j)w_j.\label{eq:net-change}
		\end{align}
		Each side of \eqref{eq:net-change} is a sum of at most $t$ terms from the $B_{\leq t}$ sequence.
		Therefore, by the $B_{\leq t}$-property, the two multisets of indices are identical, and all $\tau_p$ must be zero. Thus, no edit check pattern changes length. In other words, no nontrivial edit alignment burst occurs. Hence, $\bar{\mathbf{x}}=\bar{\mathbf{x}}'$.
	\end{proof}
	
	\begin{remark}
		The decoder described in the proof Theorem~\ref{thm:edit-Bt} enumerates ${O(n^t)}$ subsequence and therefore runs in polynomial time for fixed $t$.
		For each candidate, the edit-distance verification can be done by a dynamic program in $O(n)$ time. Thus, the codes in Construction~\ref{con:edit-Bt} is polynomial-time decodable for every fixed $t,\ell,q$.
	\end{remark}
	
	\subsection{A general framework}
	
	Construction~\ref{con:edit-Bt} and Theorem~\ref{thm:edit-Bt} suggest a general check-pattern framework for edit correction in the nanopore channel. The main components are candidate generation and distance verification.

	\paragraph{Candidate generation}
		Given the output $\ell$-mer sequence, the decoder first enumerates possible ways to discard the output $\ell$-mers that may have arisen from insertions or substitutions. For each retained subsequence with a feasible length, the decoder treats it as a deletion-only output and applies the corresponding deletion-correcting decoder.
		This produces a list of candidate input strings. The true input string is guaranteed to appear in this list, since deleting all inserted output $\ell$-mers and all substituted output $\ell$-mers leaves a deletion-only
		output of the true $\ell$-mer sequence.
		
	\paragraph{Distance verification}
		The construction stores a digest of the edit-check-pattern information, protected by a $q$-ary $t$-edit correcting code. The decoder outputs a candidate input string if its $\ell$-mer sequence is within $t$ edits of the output $\ell$-mer sequence.
	
	The framework is specific to the small-$t$ regime. When $\ell-2t<\max\{2t+1,4t-2\}$, the edit check patterns may be too short, and distinct patterns may overlap substantially. In that regime, a single alignment edit burst may merge, split, or obscure several check patterns, so the ordered check-pattern description is no longer possible and unique decoding is not guaranteed.
	
	The choice of digest in the verification step is flexible, and different choices of digest give different constructions. For instance, Construction~\ref{con:rs} also extends to the $t$-edit nanopore channel.	
	The only structural changes are that the check patterns must be defined to have length $\ell-2t$, i.e., replaced by edit check patterns, and that the GRS syndrome should be protected by a $q$-ary $t$-edit-correcting code.

	To see why this replacement in Construction~\ref{con:rs} gives unique decoding, suppose that two candidates both pass
	the verification step. Then their $\ell$-mer sequences are within $2t$ edits of each other. By Lemmas~\ref{le:lmer-to-symbol-edit} and	\ref{le:fixed-block-edit}, the two protected redundancy blocks have $q$-ary edit distance at most $2t$. Since the redundancy block is encoded by a $t$-edit-correcting code, the two candidates must encode the same GRS syndrome. On the other hand, under the condition $\ell-2t\geq \max\{2t+1,4t-2\}$, an edit-alignment argument shows that the edit check vectors of the two candidates differ in at most $t$ positions. Since the underlying GRS code has minimum distance $2t+1$, equality of the syndromes implies that the two check vectors should be equal, and hence the two candidates must be identical.
	Thus the GRS-based digest also yields a $t$-edit-correcting construction, with redundancy $2t\log_q n+\Theta(\log\log n)$.

	\section{Concluding remarks}\label{sec:re}
	We studied error-correcting codes for an adversarial nanopore channel in which a $q$-ary string is first mapped to a sequence of $\ell$-mers and the adversary then corrupts this $\ell$-mer sequence with a budget of $t$ edits. A central theme of the paper is that ambiguity in this channel is highly
	structured: whenever different consistent output $\ell$-mers exist, the underlying string must contain a periodic pattern. This observation leads naturally to the notion of check patterns. In the small-$t$ regimes considered in this paper, Theorem~\ref{thm:FW} ensures that check patterns cannot overlap too much, so they can be ordered and matched from left to right. This reduces reconstruction to an almost Hamming-type error-correction problem on check-pattern information, since the channel affects only a small number of such patterns. It is therefore natural to ask what can be done beyond the small-$t$ regime. It would be interesting to develop richer synchronization digests, perhaps recording additional period or positional information, that work for larger values of $t$ relative to $\ell$.
	
	The deletion-only constructions presented in this paper both admit polynomial-time encoding/decoding for fixed $t,\ell$, and $q$. More specifically, Construction~\ref{con:rs} admits linear-time decoding, whereas Construction~\ref{con:Bt} has linear online decoding only if a lookup table for the $B_{\leq t}$ sums is precomputed. It would therefore be interesting to construct first-order optimal deletion-correcting codes for the nanopore channel that admit linear-time decoding without such preprocessing.
	
	For the edit channel, our decoder uses an enumerate-and-verify strategy: it enumerates possible ways to discard inserted or substituted output $\ell$-mers, runs a deletion-correcting decoder on the remaining subsequence, and verifies the resulting candidate. This results in polynomial time complexity for fixed $t$, but grows like $n^{O(t)}$. Designing faster, possibly linear, decoders for the edit channel is another natural direction for future work.
	
	Lastly, the $B_{\leq t}$-based constructions are first-order optimal, but they still have $\Theta(\log\log n)$ overhead arising from protecting a short $\Theta(\log n)$-symbol digest. It remains open whether this overhead can be reduced to $O(1)$, or whether a nontrivial lower-order term is inherent for explicit constructions in this channel model.

	\bibliographystyle{IEEEtran}
	\bibliography{ref}
\end{document}